\documentclass[sigconf, nonacm]{acmart}
\usepackage{pvldb}

\usepackage{amsmath}
\usepackage{amsthm}
\usepackage{booktabs}
\usepackage{graphicx}
\usepackage{placeins}
\usepackage{enumitem}
\usepackage{xspace}
\usepackage[ruled,vlined,linesnumbered]{algorithm2e}

\newtheorem{theorem}{Theorem}[section]
\newtheorem{lemma}[theorem]{Lemma}
\newtheorem{corollary}[theorem]{Corollary}

\newcommand{\gpk}{\textsc{GPKmeans}\xspace}
\newcommand{\ours}{\textsc{GPKmeans}\xspace}
\newcommand{\LSCK}{\textsc{LSCK}\xspace}
\newcommand{\LSCKHC}{\textsc{LSCK-HC}\xspace}
\newcommand{\Lloyd}{\textsc{Lloyd}\xspace}
\newcommand{\SummaryCentroids}{\textsc{SummaryCentroids}\xspace}
\newcommand{\SummaryC}{\textsc{SummaryC}\xspace}
\newcommand{\PPLL}{\textsc{PPLL}\xspace}
\newcommand{\PCCC}{\textsc{PCCC}\xspace}
\newcommand{\COPKMeans}{\textsc{COPKMeans}\xspace}
\newcommand{\OracleClean}{\textsc{OracleClean}\xspace}

\newcommand{\OursLSCKConstraints}{\textsc{GPKmeans}-LSCKConstraints\xspace}

\newcommand{\Banking}{\textsc{Banking77}\xspace}
\newcommand{\CLINC}{\textsc{CLINC150}\xspace}
\newcommand{\Massive}{\textsc{MASSIVE}\xspace}
\newcommand{\MTOP}{\textsc{MTOP}\xspace}
\newcommand{\GoEmotions}{\textsc{GoEmotions}\xspace}
\newcommand{\CLINCI}{\textsc{CLINC}(I)\xspace}
\newcommand{\MassiveI}{\textsc{Massive}(I)\xspace}
\newcommand{\MTOPI}{\textsc{MTOP}(I)\xspace}
\newcommand{\CLINCD}{\textsc{CLINC}(D)\xspace}
\newcommand{\MassiveD}{\textsc{Massive}(D)\xspace}
\newcommand{\MTOPD}{\textsc{MTOP}(D)\xspace}
\newcommand{\GoEmo}{\textsc{GoEmo}\xspace}
\newcommand{\Reddit}{\textsc{Reddit}\xspace}

\newcommand{\bX}{\boldsymbol{X}}
\newcommand{\bx}{\boldsymbol{x}}
\newcommand{\bF}{\boldsymbol{F}}
\newcommand{\bff}{\boldsymbol{f}}

\newcommand{\bz}{\boldsymbol{z}}
\newcommand{\bmu}{\boldsymbol{\mu}}

\setcopyright{none}
\date{}

\begin{document}
\title{Efficient Group-Penalty \texorpdfstring{$k$}{k}-means for Text Clustering with Noisy LLM-Generated Constraints}

\author{%
  Feiteng Nie\textsuperscript{1},
  Yuan Sun\textsuperscript{2},
  Shengkun Zhu\textsuperscript{1},
  Sheng Wang\textsuperscript{1*},
  Zhiyong Peng\textsuperscript{1}}
\affiliation[obeypunctuation=true]{%
  \institution{\textsuperscript{1}School of Computer Science, Wuhan University}, \country{China};
  \institution{\textsuperscript{2}La Trobe Business School, La Trobe University}, \country{Australia}}
\email{\{niefeiteng,whuzsk66,swangcs,peng\}@whu.edu.cn, yuan.sun@latrobe.edu.au}

\renewcommand{\authors}{Feiteng Nie, Yuan Sun, Shengkun Zhu, Sheng Wang, and Zhiyong Peng}
\renewcommand{\shortauthors}{Nie et al.}

\begin{abstract}
$k$-means provides a simple and efficient foundation for text clustering, but embedding alone can miss task-specific semantic boundaries.
Must-link (ML) and cannot-link (CL) constraints can correct such errors, yet conventional constrained clustering assumes that these relations are already supplied.
LLMs can generate them without task-specific labels, but exhaustive querying is costly and their judgments can be incorrect.
We present \gpk, an efficient Group-Penalty $k$-means method for text clustering with noisy LLM-generated constraints.
\gpk uses LLM as an offline local pairwise annotator, querying selected anchor-candidate pairs for sparse ML links and constructing set-valued CL evidence without all-pairs queries.
It converts connected components of the ML-link graph and the generated CL sets into groups with normalized, bounded soft penalties, limiting each group's influence independently of its size.
Cached coordinate descent is proposed to efficiently solve the constraints clustering problem.
We prove finite-step convergence and derive deterministic and probabilistic oracle-referenced bounds that relate objective distortion to the penalty coefficients and to the group-level error probabilities of the generated constraints.
Across nine datasets, \gpk uses fewer LLM calls and yields more accurate constraints.
It achieves the best average ACC, NMI, and ARI with an approximate $30\times$ speedup over state-of-the-art constrained clustering methods.

\end{abstract}

\maketitle
\pagestyle{plain}
\thispagestyle{plain}

\begingroup
\renewcommand{\thefootnote}{\fnsymbol{footnote}}
\footnotetext[1]{Sheng Wang is the corresponding author.}
\endgroup

\section{Introduction}

Text clustering is an important primitive for organizing and analyzing textual records in data management, with applications including topic discovery, recommendation, anomaly detection, and community analysis~\cite{Meng2022,Chandola2009,Su2024}.
In a typical pipeline, pretrained encoders first map texts into a continuous embedding space, after which $k$-means partitions the resulting representations in a simple and scalable manner~\cite{Liu2019,su2023instructor,lloyd1982least,bahmani2012scalable,Wang20On}. 
However, geometric proximity in the embedding space does not always coincide with task-specific semantic similarity. 
Texts expressing distinct intents may exhibit nearly identical lexical patterns and highly similar embeddings, whereas texts expressing the same intent may differ substantially in surface form, and $k$-means lacks the external evidence needed to identify and correct such errors.

Pairwise side information provides a natural source of such evidence. 
A must-link (ML) indicates that two samples should be assigned to the same cluster, whereas a cannot-link (CL)  indicates that they should be separated. 
Constrained-clustering methods incorporate these relations to guide the clustering process~\cite{wagstaff2001constrained,basu2004probabilistic,bilenko2004integrating,basu2008constrained,cheng2008constrained}. 
Traditionally, pairwise relations are elicited from human annotators and domain experts~\cite{basu2004active,basu2008constrained,WangKFF12}, while benchmark evaluations often construct them from ground-truth labels~\cite{wagstaff2001constrained,basu2004probabilistic,bilenko2004integrating}. 
However, existing methods assume predefined constraints and do not address how to acquire them without labels or human supervision.

LLMs have shown promise in data cleaning and integration~\cite{NarayanCOR22}, motivating their use to acquire semantic relations from text without task-specific gold labels~\cite{achiam2023gpt4,brown2020fewshot,dubey2024llama}.
Consider a small example containing two refund-related intents in Figure~\ref{fig:example1}.
Although the texts share similar refund vocabulary, the intended user actions differ, so embedding-based $k$-means can cross the local semantic boundary and assign a delayed-refund query to the refund-request cluster.
Local LLM constraints can help reveal this boundary by linking semantically matched examples.
However, the same lexical overlap can also induce an incorrect cross-intent ML constraint, pulling the assignment in the wrong direction.
Thus, even $k$-means with generated ML and CL constraints can produce undesirable results when those constraints are treated as hard rules.

\begin{figure}[t]
    \centering
    \includegraphics[width=\linewidth]{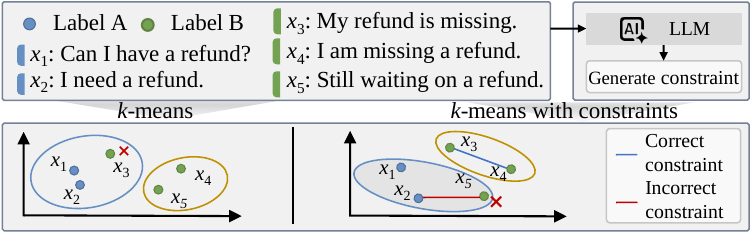}
    \caption{An example from \Banking with two refund-related intents.
    $k$-means can misplace a text across the two intents, and LLM-generated pairwise constraints can include both useful and incorrect must-link constraints.}
    \label{fig:example1}
    \vspace{-2.6em}
\end{figure}

LLMs provide semantic guidance for text clustering through label generation, representation refinement, and textual cluster descriptions or centroids~\cite{zhang2023clusterllm,viswanathan2024fewshot,zhang2025ppll,rodriguez2025kllmmeans,feng2024llmedgerefine}.
However, directly assigning clusters with LLMs can produce inconsistent labels and semantic boundaries across batches and iterations~\cite{mu2024addressing}.
When the corpus exceeds the context window, no single call can observe all samples, making it difficult to enforce cross-batch consistency and making repeated inference costly~\cite{fu2025context}. 
Using an LLM as a local pairwise annotator offers a more focused alternative, but it shifts rather than eliminates the difficulty. 
Exhaustive pair querying remains prohibitively expensive, so a limited query budget can be allocated to a part of the selected data; the resulting relations are globally sparse yet may be locally redundant or erroneous~\cite{AddankiGS2021How}. 
The resulting data management challenge is to improve semantic grouping under a limited query budget while controlling the influence of noisy relations and the computational cost of clustering.
\LSCK uses LLM-generated ML and CL constraints to correct local errors in embedding-based clustering~\cite{jia2026optimized}, while other methods use cluster descriptions, few-shot guidance, label inference, or summary centroids to assist clustering~\cite{zhang2023clusterllm,viswanathan2024fewshot,zhang2025ppll,rodriguez2025kllmmeans,feng2024llmedgerefine}.
However, strong enforcement of noisy LLM-generated constraints can distort clustering, while computationally expensive constrained optimization can limit scalability.

Various algorithms have been proposed for clustering with hard or soft ML and CL constraints~\cite{wagstaff2001constrained,basu2004probabilistic,bilenko2004integrating,davidson2005clustering,jia2023efficient,baumann2025pccc}.
\COPKMeans can fail to produce a valid partition when no feasible assignment is found~\cite{wagstaff2001constrained,davidson2005clustering}.
\PCCC accommodates uncertain relations through soft penalties, but solves mixed-integer assignment problems that can incur substantial computational cost~\cite{baumann2025pccc}.
These limitations motivate a method that tolerates noisy constraints while retaining efficient local updates.

To address this challenge, we propose \gpk, an efficient Group-Penalty $k$-means method for organizing text collections using sparse, noisy LLM-generated constraints.
\gpk follows three stages. 
First, it selects anchors within preliminary $k$-means clusters by favoring large centroid margins and proximity to their assigned centroids, queries their neighbors for sparse ML constraints, and constructs set-valued CL constraints without querying all text pairs.
Second, it converts connected components of the generated ML-link graph and the generated CL sets into typed constraint groups and incorporates them into the $k$-means objective through normalized soft penalties. 
Third, \gpk efficiently solves the resulting constraints $k$-means problem using cached coordinate descent. 
Cached cluster and group statistics allow candidate reassignments to be evaluated from only the affected objective terms, avoiding repeated global constrained optimization.
Together with bounded penalties, \gpk exploits reliable constraints while limiting the influence of noisy groups.

Our contributions are as follows:
\begin{itemize}
\item We propose \gpk, an efficient Group-Penalty $k$-means method that incorporates noisy LLM-generated ML and CL constraints into the assignment objective through normalized, bounded soft penalties rather than hard rules.

\item We develop an anchor-based querying strategy that constructs connected-component ML groups and uses cached coordinate descent to efficiently solve the constraints clustering problem.

\item We establish finite-step convergence and derive bounds that quantify how noisy constraints affect the objective as the ML and CL penalty coefficients vary. 

\item We demonstrate the effectiveness and efficiency of \gpk over nine datasets in our experimental evaluations. 
Our experimental findings indicate that \gpk achieves the highest averaged ACC, NMI, and ARI, while running approximately $30\times$ faster than the fastest constrained-clustering baseline considered.
\end{itemize}

\section{Related Work}

\subsection{Text clustering.}
$k$-means is a standard clustering primitive because it is simple, scalable, and tied to a clear squared-error objective~\cite{macqueen1967kmeans,lloyd1982least,jain2010data,xu2005survey,bahmani2012scalable,zhang1996birch,guha1998cure,ng1994efficient,Wang20On}.
Modern text clustering commonly applies these objectives after mapping documents into embedding spaces such as Sentence-BERT~\cite {reimers2019sentencebert}, INSTRUCTOR-large~\cite{su2023instructor}, instruction-tuned embeddings, or neural topic representations~\cite{xie2016unsupervised,angelov2020top2vec,grootendorst2022bertopic,aggarwal2012survey,muennighoff2023mteb}.
However, embedding distance is only a proxy for the desired task partition: nearby texts may share entities or domains while requiring different intent labels, and distant texts may express the same intent with different wording.
\gpk keeps this efficient embedding-plus-$k$-means backbone, but adds a local semantic correction channel rather than learning a new representation.

\subsection{Constrained clustering.}
Constrained clustering augments an unsupervised objective with must-link and cannot-link side information~\cite{cheng2008constrained}.
\COPKMeans treats these constraints as hard feasibility requirements~\cite{wagstaff2001constrained,klein2002instance}, while probabilistic formulations, metric-learning methods, and confidence-based constrained clustering relax supervision into soft terms, learned distances, spectral objectives, or weighted ML and CL constraints~\cite{basu2004probabilistic,bilenko2004integrating,xing2003distance,kulis2013metric,barhillel2003learning,shental2004computing,kamvar2003spectral,basu2004active,basu2008constrained,chapelle2006semi}.
\PCCC employs integer programming to solve constrained clustering problems~\cite{baumann2025pccc}.
Hard constraints can become infeasible or brittle: one incorrect ML constraint can permanently bind points that should remain separable, and one incorrect CL constraint can force an unnecessary split~\cite{davidson2005clustering}.
Flexible constrained spectral clustering and related soft formulations treat constraints as preferences rather than absolute rules~\cite{wang2010flexible}, which is closer to our setting.
However, existing methods require computationally intensive machinery, such as integer programming, branch-and-bound, or semidefinite optimization, resulting in high computational complexity.

\subsection{LLM-based clustering.}
LLMs have been used as flexible semantic annotators~\cite{brown2020fewshot,gilardi2023chatgpt,NarayanCOR22}, connecting LLM-guided clustering to weak supervision and learning with noisy labels~\cite{ratner2020snorkel,bach2019snorkel,natarajan2013learning}.
Recent LLM-based clustering work asks LLMs to guide text clustering through pairwise preferences, few-shot cluster descriptions, boundary refinement, LLM-derived positive pairs, or summary centroids~\cite{zhang2023clusterllm,viswanathan2024fewshot,feng2024llmedgerefine,zhang2025ppll,rodriguez2025kllmmeans}.
The closest line to our setting is LLM-generated constrained clustering, including \LSCK, which uses LLM-produced ML and CL constraints as hard or high-strength guidance~\cite{jia2026optimized}.
However, it depends on combinatorial optimization techniques, which limits its scalability to large problem instances.

\noindent{\textbf{Remarks.}}
\gpk uses the LLM in a narrower and more reproducible role: LLM calls are made offline to generate ML and CL constraints, and replayed as inputs to a $k$-means.
The key distinction is that \gpk treats these constraints as bounded ML-group and pairwise CL regularizers and optimizes them with cached exact local-delta updates, rather than using hard rules or direct LLM-generated labels.

\section{Preliminaries}
\label{sec:prelim}

This section introduces the assignment-matrix formulation of $k$-means used throughout the paper and its equivalent centroid-free objective.
We then define pairwise ML and CL constraints and clarify how LLM-generated constraints enter the clustering problem as fixed inputs.
Table~\ref{tab:symbols} summarizes the main notation.

\subsection{Problem Formulation of $k$-means}
\label{sec:prelim:setup}

Consider a text collection $\mathcal{D}=\{x_i\}_{i=1}^n$ and precomputed embeddings $\bx_i\in\mathbb{R}^d$, where $[n]=\{1,\ldots,n\}$ and $[k]=\{1,\ldots,k\}$.
Let $\bX=[\bx_1,\ldots,\bx_n]\in\mathbb{R}^{d\times n}$ denote the embedding matrix, and assume that the number of clusters $k$ is given.
We represent a clustering by a one-hot indicator matrix $\bF=[\bff_1,\ldots,\bff_k]\in\{0,1\}^{n\times k}$, where each row has exactly one nonzero entry and each column is nonempty.
Let $\mathcal{F}_{n,k}$ denote the set of such nonempty one-hot assignment matrices.
The induced label is $z_i=\arg\max_{\ell\in[k]} \bF_{i\ell}$, and we use $\bz$ only as shorthand for the partition encoded by $\bF$.

Let $\mathbf{M}=[\bmu_1,\ldots,\bmu_k]\in\mathbb{R}^{d\times k}$ collect the centroids.
The standard $k$-means SSE is
\begin{equation}
  \min_{\bF\in\mathcal{F}_{n,k},\mathbf{M}}\;
  \sum_{\ell=1}^k \sum_{i:\bF_{i\ell}=1} \|\bx_i-\bmu_\ell\|_2^2 .
  \label{eq:kmeans-sse}
\end{equation}
For a fixed assignment $\bF$, the centroid variables have a closed-form minimizer.
Since $\bff_\ell$ selects the samples assigned to cluster~$\ell$, $\bff_\ell^\top\bff_\ell$ is the cluster size and $\bX\bff_\ell$ is the corresponding vector sum.
Therefore the optimal centroid is the sample mean
\begin{equation}
  \bmu_\ell^* = \frac{\bX\bff_\ell}{\bff_\ell^\top\bff_\ell},\qquad \ell\in[k].
  \label{eq:optimal-centroid}
\end{equation}
Since each row of $\bF$ contains only one nonzero entry, $\bF^\top\bF$ is diagonal, and its $\ell$-th diagonal entry is $\bff_\ell^\top\bff_\ell$.
Substituting the optimal centroids back into Eq.~\eqref{eq:kmeans-sse} removes the centroid variables and leaves an objective depending only on $\bF$~\cite{Nie22Coordinate}.
Therefore, Eq.~\eqref{eq:kmeans-sse} can be equivalently rewritten as
\begin{equation}
\begin{aligned}
  S(\bF)
  &= C-\sum_{\ell=1}^k
  \frac{\bff_\ell^\top\bX^\top\bX\bff_\ell}{\bff_\ell^\top\bff_\ell},
  \qquad C=\sum_{i=1}^n\|\bx_i\|_2^2 .
\end{aligned}
  \label{eq:sse}
\end{equation}
The constant $C$ does not depend on the clustering, so minimizing $S(\bF)$ is equivalent to maximizing the centroid-free score
\begin{equation}
  \mathrm{obj}(\bF)
  := \sum_{\ell=1}^k
  \frac{\bff_\ell^\top\bX^\top\bX\bff_\ell}{\bff_\ell^\top\bff_\ell} .
  \label{eq:sse-equivalence}
\end{equation}
Therefore, solving Eq.~\eqref{eq:kmeans-sse} is equivalent to maximizing Eq.~\eqref{eq:sse-equivalence} over $\mathcal{F}_{n,k}$, whose only optimization variable is the assignment matrix $\bF$.

\begin{table}[t]
\centering
\caption{Summary of notations.}
\vspace{-1.4em}
\label{tab:symbols}
\renewcommand{\arraystretch}{1.08}
\begin{tabular}{@{}p{0.25\linewidth}p{0.67\linewidth}@{}}
\toprule
Notation & Description \\
\midrule
$\bX\in\mathbb{R}^{d\times n}$ & The embedding matrix \\
$\bF\in\{0,1\}^{n\times k}$ & The indicator matrix \\
$\mathcal{F}_{n,k}$ & The nonempty one-hot assignment space \\
$\bz\in[k]^n$ & The induced cluster labels \\
$S(\cdot)$ & The $k$-means SSE \\
$\mathcal{M}$ & The collection of ML constraint groups \\
$\mathcal{C}$ & The collection of CL constraint groups \\
$G_t$ & A constraint group \\
$\phi_{\mathrm{ML}},\phi_{\mathrm{CL}}$ & Normalized ML/CL-group penalties \\
$\lambda_{\mathrm{ML}},\lambda_{\mathrm{CL}}\ge0$ & Global ML and CL penalty coefficients \\
\bottomrule
\vspace{-3.6em}
\end{tabular}
\end{table}

\subsection{ML and CL Constraints}
\label{sec:prelim:constraints}

In constrained clustering, side information is commonly expressed through must-link and cannot-link relations~\cite{wagstaff2001constrained,klein2002instance,basu2004probabilistic,bilenko2004integrating,basu2008constrained}.
A must-link (ML) constraint on a pair $\{i,j\}$ indicates that the two samples should be co-clustered, while a cannot-link (CL) constraint indicates that they should be assigned to different clusters.
The raw side information may be provided as individual pairs or as sets that jointly encode multiple pairwise relations.
In either form, the constrained relations are sparse and need not cover all $\binom{n}{2}$ possible pairs.

In traditional constrained clustering, ML and CL constraints are usually treated as externally provided information~\cite{wagstaff2001constrained,basu2004probabilistic,bilenko2004integrating}.
Such constraints may come from human annotations, expert rules, or existing domain knowledge, and are therefore often assumed to be relatively reliable.
In text clustering, however, manually annotating ML/CL constraints is expensive and difficult to scale.
This makes LLMs an attractive option for generating ML and CL  information from selected texts~\cite{jia2026optimized}.
The LLM is only responsible for constraint generation; the constrained clustering problem below treats the resulting ML and CL constraints as fixed inputs.
The next section describes how \gpk organizes these inputs into group-level soft regularizers and incorporate them into the $k$-means assignment objective.

\section{Algorithm Design}
\label{sec:method}

\begin{figure*}[t]
  \centering
  \includegraphics[width=\textwidth]{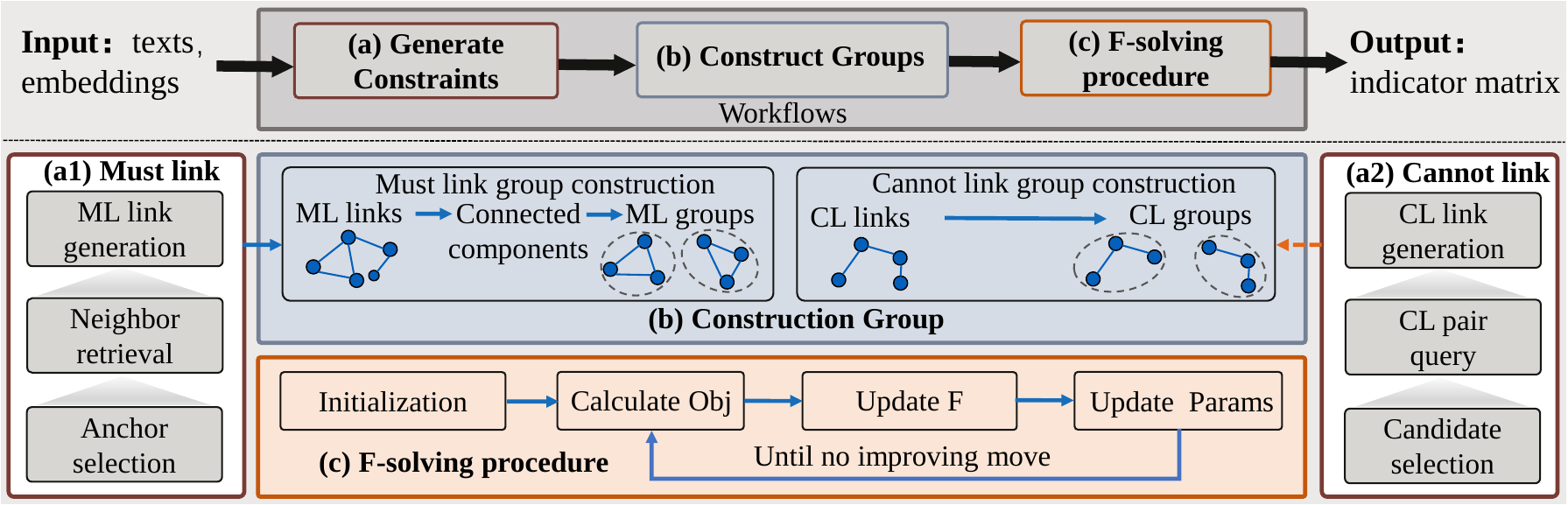}
  \vspace{-2.6em}
  \caption{Overview of \gpk.}
  \label{fig:pipeline}
  \vspace{-2em}
\end{figure*}

\subsection{Overview}
\label{sec:method:overview}

Given text samples $\mathcal{D}$, precomputed embeddings $\{\bx_i\}_{i=1}^n$, and a target cluster count $k$, \gpk efficiently returns a nonempty $k$-way assignment matrix $\bF\in\mathcal{F}_{n,k}$, equivalently labels $\bz\in[k]^n$.
The method is designed for the setting where LLM-generated constraints are useful but sparse, local, and potentially noisy.
Accordingly, ML and CL constraints enter as soft terms in a regularized $k$-means objective rather than as hard feasibility rules.

As shown in Figure~\ref{fig:pipeline}, \gpk proceeds in three stages.
First, offline constraint construction obtains sparse ML links and set-valued CL constraints without exhaustively querying all text pairs.
Second, the ML links and CL sets are represented as constraint groups with type-specific normalized penalties, so each group's contribution is bounded by the applicable global penalty coefficient.
Third, \gpk solves the assignment matrix efficiently with cached coordinate descent, maintaining cluster statistics and group occupancy counts to evaluate each candidate move by an exact local delta without recomputing the full objective or expanding ML groups into implied pairs.
For the main experiments, LLM calls are offline and the generated constraints are replayed.

\subsection{Constraint Generation}
\label{sec:method:anchor}

The anchor-based ML querying procedure selects $N_A$ real data points as anchors and retrieves up to $N_C$ nearby candidates per anchor.
We first run preliminary $k$-means on $\ell_2$-normalized embeddings and rank points within each cluster by $s_i=m_i-0.25d_i$, where $m_i$ is the gap between the point's squared distances to its two nearest centroids and $d_i$ is the squared distance to the assigned centroid.
We retain up to $q$ highest-scoring points per cluster, yielding $N_A\le qk$ anchors and favoring well-separated representatives near cluster centers.
This distributes anchors across preliminary clusters without using gold labels or LLM calls.
For each selected anchor, we query a small set of nearby candidates, excluding the anchor itself, and ask only about anchor-candidate pairs.
This creates a star-shaped local query pattern centered at the anchor, avoiding all-pairs queries among the candidates and reducing the number of LLM pair queries from $O(n^2)$ to $O(N_A N_C)$.
Using real text samples as anchors ensures the LLM receives grounded, self-contained prompts rather than abstract vector representations.

For each queried pair $(a,c)$, the LLM identifies whether the candidate clearly matches the anchor; accepted pairs become ML links, while uncertain or nonmatching candidates are not converted into CL constraints.
The anchor-based query pattern deliberately targets local ambiguity, because querying nearby candidates concentrates LLM effort on pairwise constraints that can change an assignment.
The generated constraints are stored as files; the main experiments replay these files without new LLM calls.
Algorithm~\ref{alg:anchor-querying} gives the full procedure.

\begin{algorithm}[t]
  \caption{Anchor-Based ML-Link Generation}
  \label{alg:anchor-querying}
  \KwIn{Text samples $\mathcal{D}$, embeddings $\{\mathbf{x}_i\}_{i=1}^n$,
        target clusters $k$, per-cluster anchor quota $q$, candidate count $N_C$, LLM model.}
  \KwOut{ML groups $\mathcal{M}$.}
  Run preliminary $k$-means on $\ell_2$-normalized embeddings\;
  $E_{\mathrm{ML}}\leftarrow\emptyset$\;
  \textcolor{gray}{\underline{\textit{Anchor selection:}}} Rank points within each cluster by $s_i=m_i-0.25d_i$ (Section~\ref{sec:method:anchor})\;
  Select up to $q$ highest-scoring points per cluster to form $\mathcal{A}$; set $N_A\leftarrow|\mathcal{A}|$\;
  \For{each anchor $a\in\mathcal{A}$}{
    \textcolor{gray}{\underline{\textit{Candidate retrieval:}}}
      $\mathrm{cands}(a)\leftarrow\mathrm{kNN}(\mathbf{x}_a,\{\mathbf{x}_i:i\ne a\},N_C)$\;
    \For{each candidate $c\in\mathrm{cands}(a)$}{
      \textcolor{gray}{\underline{\textit{LLM query:}}} Query whether $(a,c)$ is a clear semantic match;
      add only positive responses to $E_{\mathrm{ML}}$\;
    }
  }
  \textcolor{gray}{\underline{\textit{ML groups:}}} Build $H=(V,E_{\mathrm{ML}})$ and compute DSU components\;
    $\mathcal{M}\leftarrow\{G\in\mathrm{ConnectedComponents}(H):|G|\ge 2\}$\;
  \KwRet{$\mathcal{M}$}\;
  \end{algorithm}

For CL constraints, we adopt the set-valued construction of Jia et al.~\cite{jia2026optimized}.
Starting from an uncovered sample, the procedure grows a separation set $Y$ by uniformly sampling uncovered candidates whose distance from $Y$ exceeds the threshold $\rho=\mathrm{cost}_{kc}$, where $\mathrm{cost}_{kc}$ is the cost returned by the min-max $k$-center procedure.
The LLM adds a candidate only when it finds no semantic match among the current members; construction stops when $|Y|=k$ or no eligible candidate remains, and then restarts to form another set.
This produces a family of CL sets whose members are intended to occupy different clusters.
We reuse these generated sets as fixed side information, but not the matching-based CL solver of Jia et al.; instead, each set is converted into the normalized group penalty defined below and optimized jointly with the SSE and ML terms.

\subsection{Group Construction and Normalized Penalties}
\label{sec:method:groups}
\label{sec:method:obj}

The two constraint sources are mapped into a common group representation: connected components of the generated ML-link graph form the must-link collection $\mathcal{M}=\{G_t\}$, while each generated CL set directly forms one group in the cannot-link collection $\mathcal{C}=\{G_t\}$, where $G_t\subseteq[n]$.
Members of an ML group are encouraged to be co-clustered, while members of a CL group are discouraged from colliding in the same cluster.
This group view defines the objects used by the soft penalties below, without permanently merging samples or imposing graph-coloring constraints.

For component construction, each generated pairwise ML constraint is treated as an undirected link in a graph $H=(V,E_{\mathrm{ML}})$ over samples.
\gpk converts this graph into connected components via a disjoint-set union (DSU) procedure, and each component with at least two samples becomes an ML group $G\in\mathcal{M}$.
A component with $m$ samples is represented as a single group rather than expanded into $\binom{m}{2}$ implied pairwise penalties.
For a group $G$ with $m=|G|$ and cluster occupancy counts $n_\ell(G)=|\{i\in G:z_i=\ell\}|$, we define
\begin{equation}
\phi_{\mathrm{ML}}(G;\mathbf{z})
= \frac{m - \max_\ell\, n_\ell(G)}{m-1}, \quad m \ge 2.
\label{eq:ml}
\end{equation}
The numerator counts members outside the majority cluster; dividing by $m-1$ normalizes the penalty to $[0,1]$.
A perfectly co-clustered group incurs zero cost; the penalty is bounded by one and reaches one when every member occupies a distinct cluster, which is possible only when $m\le k$.
Thus, componentization is not hard merging: samples remain individually movable, and an incorrectly included sample contributes a bounded penalty instead of forcing the entire component to be merged.

Each separation set becomes a CL group $G\in\mathcal{C}$.
Unlike independent pairwise penalties, one normalized group term represents the mutually different relations induced by the set.
For a general CL group, we penalize the fraction of within-cluster pairs:
\begin{equation}
\phi_{\mathrm{CL}}(G;\mathbf{z})
= \frac{\displaystyle\sum_\ell \binom{n_\ell(G)}{2}}{\binom{m}{2}}, \quad m=|G|\ge 2.
\label{eq:cl}
\end{equation}
This is the fraction of pairs in $G$ that collide in the same cluster, normalized to $[0,1]$.
When $m=2$, Eq.~\eqref{eq:cl} reduces to the standard pairwise cannot-link penalty $\mathbb{I}[z_a=z_c]$.
Together with Eq.~\eqref{eq:ml}, this shows that the normalized group penalties recover classical pairwise ML and CL penalties as a special case.

If a group were expanded into all implied pairwise penalties without normalization, a group of size $m$ would produce $O(m^2)$ terms and allow large groups to dominate the objective.
With normalization, $\phi_{\mathrm{ML}},\phi_{\mathrm{CL}}\in[0,1]$, so the maximum objective shift from any incorrect group is bounded by the corresponding penalty coefficient regardless of its size (Lemma~\ref{lem:bounded}).
The coefficients $\lambda_{\mathrm{ML}}$ and $\lambda_{\mathrm{CL}}$ control the strength of the soft constraints.

The optimization variable is the one-hot assignment matrix $\bF$ introduced in Section~\ref{sec:prelim}; for readability, the optimizer is written in terms of the induced label vector $\bz$.
We write $S(\bz)$ for the same centroid-free SSE $S(\bF)$ in Eq.~\eqref{eq:sse} evaluated at the assignment matrix induced by $\bz$.
The complete \gpk objective is
\begin{equation}
  \begin{aligned}
  J(\bz)
  &\!=\!S(\bz)
  + \lambda_{\mathrm{ML}}\!\!\sum_{G\in\mathcal{M}}\!\! \phi_{\mathrm{ML}}(G;\bz) 
  + \lambda_{\mathrm{CL}}\!\!\sum_{G\in\mathcal{C}}\!\! \phi_{\mathrm{CL}}(G;\bz),
  \end{aligned}
  \label{eq:objective}
\end{equation}
The global coefficients $\lambda_{\mathrm{ML}}$ and $\lambda_{\mathrm{CL}}$ control how strongly the optimizer trades cluster compactness for semantic consistency.
In experiments, we use fixed default coefficients and report separate sensitivity studies for these choices.

\subsection{Solving the Assignment Matrix}
\label{sec:method:opt}

Given the constructed groups and the objective in Eq.~\eqref{eq:objective}, the final stage solves for the assignment matrix $\bF$.
Equivalently, we optimize its induced label vector $\bz$, where each coordinate update moves one sample from its current cluster to a candidate cluster.
\gpk uses a cached coordinate-descent solver for this discrete assignment problem.
Because the constraints are soft penalties, the solver only needs to test whether a candidate move decreases the regularized objective; it does not need to repair constraint violations or solve a separate global constrained subproblem.
Algorithm~\ref{alg:gpkmeans} summarizes the procedure.

\begin{algorithm}[t]
  \caption{\gpk Assignment Solver}
  \label{alg:gpkmeans}
  \KwIn{Embeddings $\{\mathbf{x}_i\}_{i=1}^n$, target clusters $k$, groups
    $\mathcal{M},\mathcal{C}$, coefficients $\lambda_{\mathrm{ML}},\lambda_{\mathrm{CL}}$.}
  \KwOut{Assignment $\mathbf{z}\in[k]^n$.}
  \textcolor{gray}{\underline{\textit{Initialization:}}}
  Initialize $\mathbf{z}$ using a $k$-means++ seeded $k$-means run\;
  Compute cluster statistics $a_\ell,\mathbf{b}_\ell$ for all $\ell\in[k]$ by Eq.~\eqref{eq:stats}\;
  Build the group-count cache $n_\ell(G)$ for all $\ell\in[k]$ and $G\in\mathcal{M}\cup\mathcal{C}$\;
  \BlankLine
  \textcolor{gray}{\underline{\textit{Coordinate descent:}}}
  \While{not converged}{
    \For{$i=1,\dots,n$}{
      $r\leftarrow z_i$\;
      \If{$a_r=1$}{continue\;}
      $\Delta J^* \leftarrow 0$;\quad $q^*\leftarrow r$\;
      \textcolor{gray}{\underline{\textit{Local move evaluation:}}}
      \For{$q\in[k]\setminus\{r\}$}{
        Compute $\Delta S_{i:r\to q}$ by Eq.~\eqref{eq:move-sse}\;
        Compute $\Delta J_{i:r\to q}$ by adding the incident-group ML/CL penalty deltas in Eq.~\eqref{eq:delta-J}\;
        \If{$\Delta J_{i:r\to q}<\Delta J^*$}{
          $\Delta J^*\leftarrow\Delta J_{i:r\to q}$\;
          $q^*\leftarrow q$\;
        }
      }
      \textcolor{gray}{\underline{\textit{Strict descent update:}}}
      \If{$\Delta J^* < 0$}{
        $z_i\leftarrow q^*$\;
        Update $a_r,\mathbf{b}_r,a_{q^*},\mathbf{b}_{q^*}$\;
        For all $i\in G$, decrement $n_r(G)$ and increment $n_{q^*}(G)$\;
      }
    }
  }
  \KwRet{$\mathbf{z}$}\;
\end{algorithm}

This locality comes from two parts of the objective.
After centroids are eliminated, the $k$-means SSE term changes under a single-sample move only through the source and target cluster statistics.
The group penalty term changes only through the constraint groups that contain the moved sample.
Therefore, each candidate reassignment can be evaluated by an exact local delta rather than by recomputing the full objective.
The solver uses two caches to realize this update efficiently: cluster sufficient statistics for the SSE term and group occupancy counts for the ML and CL penalties.
The cache affects only the cost of evaluating a move; accepted moves still strictly decrease the same objective in Eq.~\eqref{eq:objective}.

We first make the $k$-means part local.
For fixed $\mathbf{z}$, the optimal centroid of each cluster is its sample mean.
We maintain two sufficient statistics per cluster,
\begin{equation}
a_\ell = |\{i:z_i=\ell\}|, \qquad
\mathbf{b}_\ell = \sum_{i:z_i=\ell}\mathbf{x}_i,\qquad \ell\in[k].
\label{eq:stats}
\end{equation}
where $a_\ell$ is the cluster size and $\mathbf{b}_\ell$ is its vector sum.
With these cached statistics, the centroid-free SSE in Eq.~\eqref{eq:sse} can be evaluated as $S(\mathbf{z})=C-\sum_{\ell=1}^k\|\mathbf{b}_\ell\|_2^2/a_\ell$.
Evaluating a candidate move therefore requires only the two affected clusters and does not require recomputing all centroids or the full SSE.

For a nontrivial move of sample $i$ from its current cluster $r$ to a target cluster $q\in[k]\setminus\{r\}$, only the source and target statistics change: $a_r'\!=\!a_r-1$ and $\mathbf{b}_r'\!=\!\mathbf{b}_r-\mathbf{x}_i$ for the source cluster, whereas $a_q'\!=\!a_q+1$ and $\mathbf{b}_q'\!=\!\mathbf{b}_q+\mathbf{x}_i$ for the target cluster; all other cluster statistics remain unchanged.
The choice $q=r$ preserves the current assignment and therefore has a zero objective delta, even though the cached statistics may already reflect earlier coordinate updates. 
We include this choice implicitly as the zero-delta baseline and apply the move formulas only to $q\ne r$.
Candidate moves that would empty the source cluster are skipped, so all moves remain in the legal nonempty-cluster assignment space.
Using this cached form of Eq.~\eqref{eq:sse}, the resulting  SSE delta is
\begin{align}
\Delta S_{i:\,r\to q}
\;=\;
\frac{\|\mathbf{b}_r\|_2^2}{a_r} + \frac{\|\mathbf{b}_q\|_2^2}{a_q}
-\frac{\|\mathbf{b}_r - \mathbf{x}_i\|_2^2}{a_r - 1}
-\frac{\|\mathbf{b}_q + \mathbf{x}_i\|_2^2}{a_q + 1}.
\label{eq:move-sse}
\end{align}

We next make the constraint part local.
For each constraint group $G$, we maintain the occupancy vector $(n_1(G),\ldots,n_k(G))$, where $n_\ell(G)=|\{i\in G:z_i=\ell\}|$ records how many members of $G$ currently fall in cluster $\ell$.
A candidate move $i:r\to q$ affects only groups $i\in G$, and for each such group it changes only two entries, namely $n_r(G)\to n_r(G)-1$ and $n_q(G)\to n_q(G)+1$.
Thus the penalty change can be computed from cached counts rather than by rescanning all group members or expanding a group into pairwise constraints.
The penalty deltas follow from these changed entries:
For an ML group, the move changes the penalty by
\[
  \Delta\phi_{\mathrm{ML}}(G;\,i,r\!\to\! q) = \frac{\bigl[\max_\ell n_\ell(G)\bigr]_{\mathrm{old}} - \bigl[\max_\ell n_\ell(G)\bigr]_{\mathrm{new}}}{m-1},
\]
where ``old'' and ``new'' refer to before and after the hypothetical move.

For a CL group, moving $i$ from $r$ to $q$ removes $n_r(G)-1$ collisions in cluster $r$ and adds $n_q(G)$ collisions in cluster $q$, giving
\begin{equation}
\Delta\phi_{\mathrm{CL}}(G;\,i,r\!\to\! q)
= \frac{n_q(G) - (n_r(G)-1)}{\binom{m}{2}}.
\label{eq:cl-delta}
\end{equation}
Eq.~\eqref{eq:cl-delta} can be vectorized over all $k$ target clusters simultaneously, evaluating the full move vector for one group in $O(k)$ time without iterating over group members.

The total move cost combines Eq.~\eqref{eq:move-sse} and the penalty deltas:
\begin{equation}
\Delta J_{i:\,r\to q}
= \Delta S_{i:\,r\to q}
+ \lambda_{\mathrm{ML}}
  \sum_{\substack{G\in\mathcal{M}\\i\in G}}
  \Delta\phi_{\mathrm{ML}}
+ \lambda_{\mathrm{CL}}
  \sum_{\substack{G\in\mathcal{C}\\i\in G}}
  \Delta\phi_{\mathrm{CL}}.
\label{eq:delta-J}
\end{equation}
Defining $\Delta J_{i:\,r\to r}=0$, we select $q^*\in\arg\min_{q\in[k]}\Delta J_{i:\,r\to q}$.
In implementation, this comparison is performed by initializing $q^*=r$ with a zero delta and evaluating Eq.~\eqref{eq:delta-J} only for $q\ne r$.
Because $\Delta J$ is exact, every accepted move strictly decreases the same objective in Eq.~\eqref{eq:objective}.
Following standard incremental coordinate updates~\cite{Nie22Coordinate}, once a move $i:r\to q^*$ is accepted, the cached statistics are updated by
\begin{equation}
\begin{aligned}
a_r &\leftarrow a_r-1,
&\mathbf{b}_r &\leftarrow \mathbf{b}_r-\mathbf{x}_i,\\
a_{q^*} &\leftarrow a_{q^*}+1,
&\mathbf{b}_{q^*} &\leftarrow \mathbf{b}_{q^*}+\mathbf{x}_i,\\
n_r(G) &\leftarrow n_r(G)-1,
&n_{q^*}(G) &\leftarrow n_{q^*}(G)+1.
\end{aligned}
\label{eq:cache-update}
\end{equation}
For sample $i$, these updates take $O(d+r_i)$ work, where $r_i$ is the number of groups containing $i$; averaged over samples, this is $O(d+\bar r)$.
Theorem~\ref{thm:convergence} establishes the convergence of the proposed method in finite steps.

\subsection{Complexity Analysis}
\label{sec:method:complexity}

We analyze the optimization cost after LLM constraints have been generated and stored.
Let $n$ be the number of samples, $d$ be the embedding dimension, and $k$ be the number of clusters.
Let $\mathcal{G}=\mathcal{M}\cup\mathcal{C}$ be the set of all constraint groups, $s=\sum_{G\in\mathcal{G}}|G|$ be the total group-membership size, and $\bar{r}=s/n$ be the average number of incident groups per sample.

Initialization consists of cluster sufficient statistics and the group-count cache.
Building the group-count cache costs $\mathcal{O}(s+k|\mathcal{G}|)$.
During one sweep, all SSE move deltas cost $O(nkd)$ because each candidate move only touches the source and target cluster statistics.
The constraint term visits only groups incident to the moved sample; with dense $k$-dimensional group counts, this costs $\mathcal{O}(n\bar{r}k)$ per sweep.
Thus one sweep costs $\mathcal{O}(nk(d+\bar{r}))$, and $T$ sweeps cost $\mathcal{O}(Tnk(d+\bar{r}))$.
Constraint processing contributes only the additive $\mathcal{O}(Tnk\bar r)$ term, while the $\mathcal{O}(Tnkd)$ SSE computation remains dominant under sparse group incidence.
The optimization cost therefore depends on the average number of incident constraint groups per sample, not on all possible text pairs or all implied pairwise relations inside an ML component.

The state-of-the-art constrained-clustering method \PCCC~\cite{baumann2025pccc} handles constrained assignments through mixed-integer linear programming, whose global solve becomes computationally expensive as the problem and constraint sets grow.
\gpk avoids global integer optimization by evaluating reassignments from cached cluster statistics and incident group counts, yielding a time complexity of $\mathcal{O}(Tnk(d+\bar r))$.
This linear dependence on $n$ avoids the combinatorial growth of global constrained solvers and enables efficient constrained clustering.

\section{Convergence and Perturbation Analysis}
\label{sec:theory}

This section analyzes finite convergence of the cached coordinate descent optimizer and derives objective-level perturbation bounds for noisy LLM-generated constraints.

\gpk optimizes the assignment matrix $\bF$, but the analysis uses the equivalent label vector $\bz$ induced by $\bF$.
Let $B=|\mathcal{M}|+|\mathcal{C}|$ be the total number of typed constraint groups, and enumerate them as $\{(G_t,\tau_t)\}_{t=1}^B$, where $\tau_t\in\{\mathrm{ML},\mathrm{CL}\}$ is the type of group $G_t$.
Write $\phi_t(\bz) = \phi_{\tau_t}(G_t;\bz)$ for the penalty of the $t$-th constraint group.
Let $\lambda_t=\lambda_{\mathrm{ML}}$ if $\tau_t=\mathrm{ML}$ and $\lambda_t=\lambda_{\mathrm{CL}}$ if $\tau_t=\mathrm{CL}$.
The aggregate penalty is $\Phi_\lambda(\bz)=\sum_{t=1}^B \lambda_t\phi_t(\bz)$.
For the noise analysis, $\phi_t^*(\bz)$ denotes a bounded oracle-reference penalty on the same index set $[B]$, and $\widetilde{\phi}_t(\bz)$ denotes the LLM-generated, possibly noisy, penalty.
Their aggregate counterparts are $\Phi_\lambda^*(\bz)=\sum_{t=1}^B\lambda_t\phi_t^*(\bz)$ and $\widetilde{\Phi}_\lambda(\bz)=\sum_{t=1}^B\lambda_t\widetilde{\phi}_t(\bz)$.

\subsection{Boundedness and Special Cases}
\label{sec:theory:bounded}

The critical property enabling bounded regularization is that both penalty functions lie in $[0,1]$.
This limits the maximum objective shift caused by an incorrect LLM group, independent of the group size.

\begin{lemma}[Bounded Influence]
\label{lem:bounded}
For any constraint group $G_t$ with $m = |G_t| \ge 2$ and any assignment $\mathbf{z}$,
\begin{equation}
0 \;\le\; \phi_t(\mathbf{z}) \;\le\; 1.
\label{eq:phi-bounded}
\end{equation}
Consequently, the maximum objective shift caused by a single wrong group is
\begin{equation}
\bigl|\lambda_t\bigl[\widetilde{\phi}_t(\mathbf{z})
- \phi_t^*(\mathbf{z})\bigr]\bigr| \;\le\; \lambda_t.
\label{eq:influence-bound}
\end{equation}
\end{lemma}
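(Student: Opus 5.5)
The plan is to verify \eqref{eq:phi-bounded} separately for the two group types, using only the definitions \eqref{eq:ml} and \eqref{eq:cl} together with the occupancy identity $\sum_{\ell=1}^k n_\ell(G_t)=m$. The shift bound \eqref{eq:influence-bound} will then follow from the elementary fact that two numbers in $[0,1]$ differ by at most $1$.

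\emph{ML groups.} The occupancy counts are nonnegative integers summing to $m$, so $\max_\ell n_\ell(G_t)\le m$. This makes the numerator $m-\max_\ell n_\ell(G_t)$ nonnegative. Because $m\ge2$, the denominator $m-1$ is positive, and hence $\phi_{\mathrm{ML}}\ge0$. For the upper bound, every $i\in G_t$ has some label $z_i$, so at least one cluster is occupied and $\max_\ell n_\ell(G_t)\ge1$. The numerator is therefore at most $m-1$, which gives $\phi_{\mathrm{ML}}\le1$.

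\emph{CL groups.} Each term $\binom{n_\ell(G_t)}{2}$ is nonnegative, so $\phi_{\mathrm{CL}}\ge0$. For the upper bound, the key step is the inequality
\[
\sum_\ell \binom{n_\ell}{2}\le \binom{\sum_\ell n_\ell}{2}=\binom{m}{2}.
\]
I will prove it combinatorially: the left side counts unordered pairs of members of $G_t$ that share a cluster, and these pairs form a subset of all $\binom{m}{2}$ pairs of $G_t$. An algebraic alternative is to expand $\binom{m}{2}-\sum_\ell\binom{n_\ell}{2}=\sum_{\ell<\ell'}n_\ell n_{\ell'}\ge0$. Since $m\ge2$ gives $\binom{m}{2}>0$, dividing through yields $\phi_{\mathrm{CL}}\le1$.

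\emph{Shift bound.} The lemma is meant to apply to every penalty in the paper's setup. Both $\widetilde\phi_t$ and $\phi_t^*$ are of one of these two normalized forms, and $\phi_t^*$ is in any case assumed to be bounded in $[0,1]$. So both quantities lie in $[0,1]$, and their difference lies in $[-1,1]$. Multiplying by $\lambda_t\ge0$ then gives \eqref{eq:influence-bound}.

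I do not expect a real obstacle here. The only points needing care are where the hypothesis $m\ge2$ is used, namely to guarantee the denominators are positive, and the collision-counting inequality for CL groups. Everything else is immediate from the definitions.
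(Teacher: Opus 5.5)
Your proposal is correct and follows the paper's proof essentially step for step: the ML bound via $1\le\max_\ell n_\ell(G_t)\le m$, the CL bound by counting within-cluster pairs as a subset of all $\binom{m}{2}$ pairs, and the shift bound from both penalties lying in $[0,1]$. Your explicit notes on where $m\ge 2$ and $\lambda_t\ge 0$ are used, and the algebraic identity $\binom{m}{2}-\sum_\ell\binom{n_\ell}{2}=\sum_{\ell<\ell'}n_\ell n_{\ell'}$, are small additions that the paper leaves implicit.
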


\begin{proof}
For an ML group, let $n_\ell(G)=|\{i\in G:z_i=\ell\}|$ denote its occupancy count in cluster $\ell\in[k]$.
The majority cluster count satisfies $1\le \max_{\ell\in[k]}n_\ell(G)\le m$, so $0\le m-\max_{\ell\in[k]}n_\ell(G)\le m-1$.
Dividing by $m-1$ gives $\phi_{\mathrm{ML}}(G;\mathbf{z}) \in [0,1]$.

For a CL group, the numerator $\sum_{\ell\in[k]} \binom{n_\ell(G)}{2}$ counts the number of within-cluster pairs among group members.
This count is at least~$0$ (when all members are in distinct clusters) and at most $\binom{m}{2}$ (when all members are in the same cluster).
Dividing by $\binom{m}{2}$ gives $\phi_{\mathrm{CL}}(G;\mathbf{z}) \in [0,1]$.

Since $\phi_t^*$ and $\widetilde{\phi}_t$ are both in $[0,1]$, their difference satisfies $|\widetilde{\phi}_t - \phi_t^*| \le 1$, and multiplying by $\lambda_t$ yields~\eqref{eq:influence-bound}.
\end{proof}

Equation~\eqref{eq:influence-bound} is the formal counterpart of the key design claim: an incorrect LLM group changes the objective by at most its corresponding ML or CL penalty coefficient, not by an amount that grows with the number of implied pairs.
By contrast, a hard ML merge or a pairwise penalty summed over all $\binom{m}{2}$ pairs has no such finite ceiling.

\begin{lemma}[Pairwise Reduction]
\label{lem:pairwise}
When $|G_t| = 2$, the group penalties in Eqs.~\eqref{eq:ml} and~\eqref{eq:cl} reduce to standard pairwise indicators:
\[
  \phi_{\mathrm{ML}}(\{i,j\};\mathbf{z}) = \mathbb{I}[z_i \ne z_j], \qquad \phi_{\mathrm{CL}}(\{i,j\};\mathbf{z}) = \mathbb{I}[z_i = z_j].
\]
\end{lemma}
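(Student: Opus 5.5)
The plan is to substitute $m=2$ into Eqs.~\eqref{eq:ml} and~\eqref{eq:cl} and split on whether $z_i=z_j$; no earlier result is needed beyond the definitions. For a pair group $G=\{i,j\}$, the occupancy vector $(n_1(G),\ldots,n_k(G))$ has only two possible shapes. If $z_i=z_j=\ell_0$, then $n_{\ell_0}(G)=2$ and every other entry is zero. If $z_i\ne z_j$, then $n_{z_i}(G)=n_{z_j}(G)=1$ and every other entry is zero.

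For the ML penalty, the denominator is $m-1=1$. In the first case $\max_\ell n_\ell(G)=2$, so $\phi_{\mathrm{ML}}=(2-2)/1=0$. In the second case $\max_\ell n_\ell(G)=1$, so $\phi_{\mathrm{ML}}=(2-1)/1=1$. This is exactly $\mathbb{I}[z_i\ne z_j]$.

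For the CL penalty, the denominator is $\binom{2}{2}=1$. We use $\binom{0}{2}=\binom{1}{2}=0$ and $\binom{2}{2}=1$. In the first case the numerator is $\binom{2}{2}=1$. In the second case every term vanishes, so the numerator is $0$. Hence $\phi_{\mathrm{CL}}=\mathbb{I}[z_i=z_j]$.

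There is no real obstacle here. The only point to state explicitly is the convention that $\binom{n}{2}=0$ for $n\in\{0,1\}$, which makes the empty and singleton clusters contribute nothing to the CL numerator. The case split above is exhaustive, since every assignment $\bz$ either co-clusters $i$ and $j$ or separates them.
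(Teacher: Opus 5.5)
Your proposal is correct and takes the same approach as the paper's proof. Both substitute $m=2$, note that both denominators equal $1$, and split on whether $z_i=z_j$ to read off the numerators; your explicit use of $\binom{0}{2}=\binom{1}{2}=0$ just states a convention the paper leaves implicit.
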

\begin{proof}
Write $n_\ell=n_\ell(\{i,j\})$ for $\ell\in[k]$.
For an ML pair, if $z_i = z_j$ then $\max_{\ell\in[k]}n_\ell=2$ and the numerator $m-\max_{\ell\in[k]}n_\ell=0$; if $z_i \ne z_j$ then $\max_{\ell\in[k]}n_\ell=1$ and the numerator equals $1=m-1$.
For a CL pair, $\binom{m}{2}=1$ and $\sum_{\ell\in[k]}\binom{n_\ell}{2}=1$ iff $z_i=z_j$, and it is~$0$ otherwise.
\end{proof}

The group formulation therefore generalizes both ML and CL constraints while recovering their classical pairwise penalties as the special case $m=2$.
The reported experiments use group-valued constraints rather than expanded pairwise penalty terms.

\subsection{Finite Convergence of \gpk}
\label{sec:theory:convergence}

We now show that the coordinate-descent optimizer in Algorithm~\ref{alg:gpkmeans} terminates in finite steps.
This relies on the fact that the group-count cache computes the exact change in the reduced objective, not an approximation.
Here $J$ denotes the fixed generated-constraint objective in Eq.~\eqref{eq:objective}.

Let the legal assignment space be $\mathcal{Z} = \{\mathbf{z}\in[k]^n : a_\ell(\mathbf{z})>0,\;\forall \ell\in[k]\}$, where $a_\ell=|\{i:z_i=\ell\}|$ is the size of cluster~$\ell$.
Expanding the centroid-free SSE term $S(\mathbf{z})$ in Eq.~\eqref{eq:objective} gives the equivalent form
\begin{equation}
  \begin{aligned}
  \!\!J(\mathbf{z})
  \!=C \!-\! \sum_{\ell=1}^k \frac{\|\mathbf{b}_\ell\|_2^2}{a_\ell} 
  \!+\! \lambda_{\mathrm{ML}}\!\!\sum_{G\in\mathcal{M}} \phi_{\mathrm{ML}}(G;\mathbf{z}) \!+\!\lambda_{\mathrm{CL}}\!\!\sum_{G\in\mathcal{C}} \phi_{\mathrm{CL}}(G;\mathbf{z}),
  \end{aligned}
  \label{eq:reduced-obj}
  \end{equation}
where $C$ is a constant and $\mathbf{b}_\ell = \sum_{i:z_i=\ell}\mathbf{x}_i$ is the cluster vector sum for $\ell\in[k]$.

Moving sample $i$ from its current cluster $r$ to a candidate cluster $q\in[k]\setminus\{r\}$ changes $J$ by
\begin{equation}
\Delta J_{i:\,r\to q} = \Delta S_{i:\,r\to q} + \Delta\Phi_{i:\,r\to q}.
\label{eq:theory-delta-J}
\end{equation}
The SSE component $\Delta S$ depends only on the two affected cluster statistics $(a_r, \mathbf{b}_r, a_q, \mathbf{b}_q)$ via Eq.~\eqref{eq:move-sse}.
The penalty component $\Delta\Phi$ depends only on the constraint groups that contain~$i$; all other groups are unaffected.
This locality is what makes the group-count cache exact: for each group $G \ni i$, the cached occupancy vector $\bigl(n_\ell(G)\bigr)_{\ell=1}^k$ changes only in positions $r$ and $q$, and the new penalty values can be evaluated directly from the updated counts without rescanning the group.

\begin{theorem}[Finite Convergence]
\label{thm:convergence}
Suppose a move $(i, r \to q)$ is accepted only when $\Delta J_{i:\,r\to q} < 0$.
Then: the sequence $\{J(\mathbf{z}^{(s)})\}$ of objective values after each accepted move is strictly decreasing; the total number of accepted moves is finite; and the algorithm terminates after finitely many sweeps.
\end{theorem}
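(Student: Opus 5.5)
The argument rests on two facts: the cached move deltas equal the true change in $J$, and the legal assignment space $\mathcal{Z}$ is finite. With both in hand, finite convergence follows from a standard potential-function argument.

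\textbf{Exactness of the cached deltas.} We first check that for every state $\mathbf{z}\in\mathcal{Z}$ and every candidate move, $\Delta J_{i:\,r\to q}=J(\mathbf{z}')-J(\mathbf{z})$, where $\mathbf{z}'$ is $\mathbf{z}$ with $z_i$ replaced by $q$.

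For the SSE part we use the cached form of Eq.~\eqref{eq:reduced-obj}. Only the terms for clusters $r$ and $q$ involve $(a_r,\mathbf{b}_r,a_q,\mathbf{b}_q)$, and these change exactly as stated before Eq.~\eqref{eq:move-sse}. Subtracting gives Eq.~\eqref{eq:move-sse} up to sign convention, so that $\Delta S$ is new minus old. The source cluster keeps $a_r-1\ge1$ because moves that would empty it are skipped, so every denominator is positive.

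For the penalty part, only groups $G\ni i$ change, and only in the entries $n_r(G)$ and $n_q(G)$. The ML delta is then immediate from Eq.~\eqref{eq:ml}. The CL delta follows from $\binom{n-1}{2}-\binom{n}{2}=-(n-1)$ and $\binom{n+1}{2}-\binom{n}{2}=n$, which gives Eq.~\eqref{eq:cl-delta}.

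We also note by induction on accepted moves that the update in Eq.~\eqref{eq:cache-update} keeps the caches equal to their defining quantities at every state. Hence the deltas remain exact throughout the run, not only at initialization.

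\textbf{Strict decrease and finitely many moves.} By exactness, each accepted move satisfies $J(\mathbf{z}^{(s+1)})=J(\mathbf{z}^{(s)})+\Delta J<J(\mathbf{z}^{(s)})$, so the objective values strictly decrease. Every iterate stays in $\mathcal{Z}$, because initialization is nonempty and emptying moves are skipped. Since $|\mathcal{Z}|\le k^n$, the function $J$ takes finitely many values on $\mathcal{Z}$. A strictly decreasing sequence cannot repeat a state, so the number of accepted moves is at most $|\mathcal{Z}|-1$. Lemma~\ref{lem:bounded} additionally shows that $J$ is bounded below by $0$, since both $S\ge0$ and the penalties are nonnegative, but finiteness of $\mathcal{Z}$ is the decisive fact.

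\textbf{Termination in finitely many sweeps.} We take the convergence test to be a full sweep with no accepted move. That is exactly a state where no $q$ gives $\Delta J<0$ for any $i$, i.e.\ a local optimum under single-point moves. Every sweep that does not terminate contains at least one accepted move, so the number of sweeps is at most the number of accepted moves plus one.

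The main point needing care is pinning down the stopping rule in Algorithm~\ref{alg:gpkmeans}, whose ``not converged'' condition is left informal. We fix it as ``some move was accepted in the last sweep''. A tolerance-based rule would also terminate, since it stops no later than this one.
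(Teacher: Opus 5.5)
Your proposal is correct and follows the same potential-function argument as the paper: exact deltas make each accepted move strictly decrease $J$, and since $\mathcal{Z}$ has at most $k^n$ states, the number of accepted moves is finite and so is the number of sweeps. The extra details you supply (verifying Eqs.~\eqref{eq:move-sse} and \eqref{eq:cl-delta}, the cache-invariance induction, keeping iterates in $\mathcal{Z}$, and fixing the stopping rule as a sweep with no accepted move) make explicit what the paper either states in the surrounding text or leaves informal.
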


\begin{proof}
When a move is accepted, the acceptance criterion gives
\[
  J(\mathbf{z}^{(s+1)}) - J(\mathbf{z}^{(s)}) = \Delta J_{i:\,r\to q} < 0,
\]
so the objective decreases strictly.
When no move is accepted in a sweep, the assignment is unchanged and the objective does not change; the algorithm stops.

The set $\mathcal{Z}$ is finite (it has at most $k^n$ elements).
Since $J$ is a deterministic real-valued function on the finite set $\mathcal{Z}$, it takes only finitely many distinct values.
A strictly decreasing sequence of objective values cannot repeat any value.
Therefore, the number of accepted moves is bounded by $|\mathcal{Z}|$, and the algorithm halts.
\end{proof}

\begin{corollary}[Coordinate-Wise Stationarity]
\label{cor:stationary}
For every sample $i$ whose current cluster $r=z_i^{\mathrm{loc}}$ has $a_r>1$, and every target $q\in[k]\setminus\{r\}$, the terminal assignment $\mathbf{z}^{\mathrm{loc}}$ satisfies
\[
  J(\mathbf{z}^{\mathrm{loc}})
  \;\le\; J\bigl(\mathbf{z}^{\mathrm{loc}}_{i\to q}\bigr),
\]
where $\mathbf{z}^{\mathrm{loc}}_{i\to q}$ is the assignment with sample $i$ reassigned to cluster~$q$.
That is, no legal single-sample move that preserves nonempty clusters can decrease $J$.
\end{corollary}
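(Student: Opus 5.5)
The plan is to argue by contradiction from the termination condition established in Theorem~\ref{thm:convergence}. By that theorem the algorithm halts, and by Algorithm~\ref{alg:gpkmeans} it halts only after a complete sweep over $i=1,\dots,n$ in which no move is accepted. We will therefore work with the assignment $\mathbf{z}^{\mathrm{loc}}$ held during this final sweep. Because no move is accepted anywhere in that sweep, $\mathbf{z}^{\mathrm{loc}}$ stays fixed throughout it, and so do the cached statistics $a_\ell,\mathbf{b}_\ell$ and the occupancy counts $n_\ell(G)$. Consequently, at the moment sample $i$ is visited, the caches describe exactly $\mathbf{z}^{\mathrm{loc}}$.

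Fix a sample $i$ with $r=z_i^{\mathrm{loc}}$ and $a_r>1$. The guard ``if $a_r=1$ continue'' is then not triggered, so the solver evaluates $\Delta J_{i:r\to q}$ for every $q\ne r$. The key step is the exactness identity
\[
\Delta J_{i:r\to q}=J(\mathbf{z}^{\mathrm{loc}}_{i\to q})-J(\mathbf{z}^{\mathrm{loc}}),
\]
and we will establish it term by term using the decomposition~\eqref{eq:theory-delta-J}.

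\textbf{SSE term.} Write the SSE in its cached form $S=C-\sum_\ell\|\mathbf{b}_\ell\|_2^2/a_\ell$. A move of $i$ from $r$ to $q$ changes only the source and target statistics, replacing $(a_r,\mathbf{b}_r,a_q,\mathbf{b}_q)$ with $(a_r-1,\mathbf{b}_r-\mathbf{x}_i,a_q+1,\mathbf{b}_q+\mathbf{x}_i)$. Substituting these into $S$ gives exactly Eq.~\eqref{eq:move-sse}. The condition $a_r-1\ge1$ is what keeps the moved assignment inside $\mathcal{Z}$ and the expression well defined.

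\textbf{ML and CL terms.} Groups not containing $i$ have unchanged occupancy vectors, so their penalties do not change. For each group $G\ni i$, only the entries $n_r(G)$ and $n_q(G)$ change, by $-1$ and $+1$ respectively. The ML delta then follows directly from Eq.~\eqref{eq:ml}. For CL groups, use
\[
\tbinom{n-1}{2}-\tbinom{n}{2}=-(n-1),\qquad \tbinom{n+1}{2}-\tbinom{n}{2}=n,
\]
which yields Eq.~\eqref{eq:cl-delta}. Summing these contributions with weights $\lambda_{\mathrm{ML}}$ and $\lambda_{\mathrm{CL}}$ gives Eq.~\eqref{eq:delta-J}, which establishes the identity.

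With the identity in hand, we conclude as follows. The best candidate satisfies $\Delta J^*=\min\{0,\min_{q\ne r}\Delta J_{i:r\to q}\}$. No move was accepted during the final sweep, so $\Delta J^*\ge 0$. Hence $\Delta J_{i:r\to q}\ge0$ for every $q\ne r$, which by the identity is the claimed inequality $J(\mathbf{z}^{\mathrm{loc}})\le J(\mathbf{z}^{\mathrm{loc}}_{i\to q})$.

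The main obstacle is the bookkeeping that justifies evaluating every sample against the same fixed $\mathbf{z}^{\mathrm{loc}}$, rather than against an assignment that shifts during the sweep. Within an ordinary sweep, earlier accepted moves do change the caches. The argument resolves this by using that termination means a full sweep with zero accepted moves, so the assignment and caches are constant across that sweep. If instead the stopping rule were interpreted as ``objective unchanged,'' the same reasoning would still apply, since by Theorem~\ref{thm:convergence} any accepted move strictly decreases $J$.
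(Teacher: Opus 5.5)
Your proof is correct and matches the argument the paper intends: the paper gives no separate proof, treating the corollary as an immediate consequence of the termination condition in Theorem~\ref{thm:convergence} (a full sweep with no accepted move) combined with the exactness of the cached delta in Eq.~\eqref{eq:delta-J}, which you verify term by term. The one ingredient you leave implicit is worth stating: by induction from the initialization, the updates in Eq.~\eqref{eq:cache-update} keep the caches consistent with the current assignment, so they describe $\mathbf{z}^{\mathrm{loc}}$ at the start of the final sweep; you should also say explicitly that you mean termination by convergence, not by the 80-iteration cap used in the experiments.
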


Theorem~\ref{thm:convergence} establishes optimization stability: the soft group penalties do not introduce cycling or an undefined descent direction.
Importantly, it makes no claim about global optimality.
Since the underlying problem is a discrete, non-convex combinatorial program, the terminal point is a local minimum in the coordinate sense but may be far from a global optimum.

\subsection{LLM Constraint Noise Model}
\label{sec:theory:noise}

Having established bounded penalties and finite descent, we now model the error source that is specific to LLM-constrained clustering: the LLM output may disagree with what the true labels would prescribe.

Let $\widetilde{\phi}_t$ denote the penalty function of the $t$-th final group constructed from the generated constraints, and let $\phi_t^*$ denote a fixed oracle-reference penalty on the same group index set.
Both functions take values in $[0,1]$.
We define the group-level error indicator and its probability as
\begin{equation}
R_t=\mathbb{I}[\widetilde{\phi}_t\ne\phi_t^*],
\qquad q_t=\Pr(R_t=1).
\label{eq:group-error}
\end{equation}
An ML or CL group is clean when all of its induced pairwise relations agree with the label-induced oracle constraints; otherwise, it is treated as a noisy group.
Groups present in only one collection can be represented by the zero function in the other collection, yielding a common index set for comparison.
The bounds below require only these bounded group-level error indicators and do not assume independence unless explicitly stated for the high-probability result.

\subsection{Bounded Penalty Perturbation}
\label{sec:theory:perturbation}

We now quantify how noisy constraints perturb the aggregate penalty.
Define the penalty perturbation at assignment $\mathbf{z}$ as
\begin{equation}
E(\mathbf{z})
= \widetilde{\Phi}_\lambda(\mathbf{z}) - \Phi_\lambda^*(\mathbf{z})
= \sum_{t=1}^{B} \lambda_t \bigl[\widetilde{\phi}_t(\mathbf{z}) - \phi_t^*(\mathbf{z})\bigr].
\label{eq:E-def}
\end{equation}
The term $E(\mathbf{z})$ captures how much the LLM-generated penalty deviates from the oracle penalty at a given assignment.
Because the optimization uses $\widetilde{\Phi}_\lambda$ while the oracle reference is $\Phi_\lambda^*$, bounding $\sup_{\mathbf{z}} |E(\mathbf{z})|$ bounds how much the LLM-generated constraints can distort the objective.

\begin{theorem}[Bounded Penalty Perturbation]
\label{thm:perturbation}
(i) Deterministic bound.
For a fixed set of LLM errors, only the noisy constraints can change the oracle penalty.
Each normalized group penalty can differ from its oracle counterpart by at most one.
Thus the worst-case perturbation over all assignments is bounded by the total penalty coefficient of the erroneous constraints:
\begin{equation}
\sup_{\mathbf{z}} |E(\mathbf{z})| \;\le\; \sum_{t=1}^B \lambda_t R_t.
\label{eq:det-bound}
\end{equation}

(ii) Expected bound.
Taking expectation replaces each binary error indicator $R_t$ by its error probability.
If $\mathbb{E}[R_t]=q_t$, the expected perturbation is controlled by the coefficient-weighted sum of group-level error probabilities:
\begin{equation}
\mathbb{E}\!\left[\sup_{\mathbf{z}} |E(\mathbf{z})|\right]
\;\le\; \sum_{t=1}^B \lambda_t q_t.
\label{eq:exp-bound}
\end{equation}

(iii) High-probability bound.
For any $\delta\in(0,1)$, define
\begin{equation}
\varepsilon_{\mathrm{noise},\delta}
= \sum_{t=1}^B \lambda_t q_t
+ \sqrt{\frac{1}{2}\!\left(\sum_{t=1}^B \lambda_t^2\right)\log\frac{1}{\delta}}.
\label{eq:eps-def}
\end{equation}
Under conditional independence of $R_1,\ldots,R_B$ given the data, Hoeffding's inequality implies that, with probability at least $1-\delta$,
\begin{equation}
\sup_{\mathbf{z}} |E(\mathbf{z})|
\le \varepsilon_{\mathrm{noise},\delta}.
\label{eq:eps-bound}
\end{equation}
\end{theorem}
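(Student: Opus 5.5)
The plan is to prove part (i) pointwise in $\mathbf{z}$ and then take the supremum, and to obtain (ii) and (iii) from (i). The key point is that the right-hand side of (i) does not depend on $\mathbf{z}$. This turns the random supremum over up to $k^n$ assignments into a single scalar random variable $Y=\sum_{t=1}^B\lambda_tR_t$. Consequently, no union bound over $\mathcal{Z}$ is needed.

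For (i), I read the error indicator $R_t=\mathbb{I}[\widetilde{\phi}_t\ne\phi_t^*]$ in \eqref{eq:group-error} as inequality of the two penalty functions on $\mathcal{Z}$. Fix $\mathbf{z}$ and apply the triangle inequality to \eqref{eq:E-def}:
\[
|E(\mathbf{z})|\le\sum_{t=1}^B\lambda_t\bigl|\widetilde{\phi}_t(\mathbf{z})-\phi_t^*(\mathbf{z})\bigr|.
\]
I then bound each term by case.
\begin{itemize}
\item If $R_t=0$, the two functions coincide, so the term is $0=\lambda_tR_t$ for every $\mathbf{z}$.
\item If $R_t=1$, Lemma~\ref{lem:bounded}, via \eqref{eq:influence-bound}, bounds the term by $\lambda_t=\lambda_tR_t$. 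Groups present in only one collection are represented by the zero function, which still lies in $[0,1]$, so the lemma applies.
\end{itemize}
Hence $|E(\mathbf{z})|\le\sum_t\lambda_tR_t$ for every $\mathbf{z}$. Since this bound is independent of $\mathbf{z}$, taking the supremum over the finite set $\mathcal{Z}$ gives \eqref{eq:det-bound}.

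For (ii), I take expectations of both sides of \eqref{eq:det-bound}. Monotonicity of expectation together with linearity, using $\mathbb{E}[R_t]=q_t$, gives \eqref{eq:exp-bound}. This step needs no independence assumption, and $\sup_{\mathbf{z}}|E(\mathbf{z})|$ is measurable because it is a maximum over a finite set.

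For (iii), the summands $\lambda_tR_t$ are conditionally independent given the data and take values in $[0,\lambda_t]$. Hoeffding's inequality applied to $Y$ therefore gives, for $s>0$,
\[
\Pr\bigl(Y-\mathbb{E}Y\ge s\bigr)\le\exp\Bigl(-2s^2\Big/\textstyle\sum_t\lambda_t^2\Bigr).
\]
I set the right-hand side equal to $\delta$, which gives $s=\sqrt{\tfrac12(\sum_t\lambda_t^2)\log(1/\delta)}$. Since $\mathbb{E}Y=\sum_t\lambda_tq_t$, with probability at least $1-\delta$ we have $Y\le\varepsilon_{\mathrm{noise},\delta}$. Chaining this with (i) yields \eqref{eq:eps-bound}.

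I expect the main obstacle to be conceptual rather than computational. It is to make precise that $R_t$ compares whole functions rather than values at a particular $\mathbf{z}$; only then does the bound in (i) hold uniformly in $\mathbf{z}$. A second point needing care is that the $q_t$ and the Hoeffding step are taken conditionally on the data, so $\mathbb{E}Y$ should be read as the conditional expectation given the data. Once uniformity is in place, parts (ii) and (iii) are standard.
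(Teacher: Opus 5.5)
Your proposal is correct and follows essentially the same route as the paper: a pointwise case split on $R_t$ using Lemma~\ref{lem:bounded} gives a bound independent of $\mathbf{z}$, which passes to the supremum; (ii) then follows by linearity of expectation, and (iii) by a one-sided Hoeffding bound on $\sum_t\lambda_t R_t$ with ranges $[0,\lambda_t]$. Your remark that $q_t$ must be read conditionally on the data for the Hoeffding step is a fair point of care that the paper leaves implicit.
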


\begin{proof}
(i) Consider a fixed assignment $\mathbf{z}$ and a fixed realization of the error indicators $\{R_t\}$.
For each group index $t$, if $R_t = 0$, then $\widetilde{\phi}_t = \phi_t^*$ and the $t$-th term in~\eqref{eq:E-def} is exactly~$0$.
If $R_t = 1$, Lemma~\ref{lem:bounded} gives $|\lambda_t[\widetilde{\phi}_t(\mathbf{z}) - \phi_t^*(\mathbf{z})]| \le \lambda_t$.
Summing over all $t$ and noting that the upper bound $\sum_t \lambda_t R_t$ does not depend on $\mathbf{z}$ yields
\[
  |E(\mathbf{z})| \;\le\; \sum_{t=1}^B \lambda_t R_t \quad \text{for all } \mathbf{z},
\]
which is equivalent to~\eqref{eq:det-bound}.

(ii) Apply the tower property: $\mathbb{E}[\sup_\mathbf{z} |E(\mathbf{z})|] \le \mathbb{E}[\sum_t \lambda_t R_t] = \sum_t \lambda_t \mathbb{E}[R_t] = \sum_t \lambda_t q_t$.

(iii) By part~(i), $\sup_\mathbf{z}|E(\mathbf{z})| \le \sum_t \lambda_t R_t$.
Each $\lambda_t R_t$ is a bounded random variable in $[0, \lambda_t]$ with mean $\lambda_t q_t$.
Under conditional independence, applying the Hoeffding inequality to $\sum_t \lambda_t(R_t - q_t)$ gives
\[
  \Pr\!\left(\sum_t \lambda_t R_t \ge \sum_t \lambda_t q_t + u\right) \le \exp\!\left(-\frac{2u^2}{\sum_t \lambda_t^2}\right).
\]
Setting the right-hand side equal to $\delta$ and solving for $u$ yields $u = \sqrt{\tfrac{1}{2}(\sum_t \lambda_t^2)\log(1/\delta)}$, from which $\sup_\mathbf{z}|E(\mathbf{z})| \le \varepsilon_{\mathrm{noise},\delta}$ with probability at least $1-\delta$.
\end{proof}

The bound $\varepsilon_{\mathrm{noise},\delta}$ consists of the expected coefficient-weighted error envelope $\sum_t\lambda_tq_t$ and a Hoeffding deviation term controlling fluctuations above this mean.
The latter scales with $\sqrt{\sum_t\lambda_t^2\log(1/\delta)}$, so the bound grows with the penalty magnitudes and the required confidence level.

\subsection{Oracle Objective Regret}
\label{sec:theory:regret}

To distinguish the generated constraints from their oracle reference in the noise analysis, define the oracle objective $J^*(\mathbf{z}) = S(\mathbf{z}) + \Phi_\lambda^*(\mathbf{z})$ and write the generated-constraint objective $J$ as $\widetilde{J}(\mathbf{z}) = S(\mathbf{z}) + \widetilde{\Phi}_\lambda(\mathbf{z})$.
Their relationship is
\begin{equation}
\widetilde{J}(\mathbf{z})
= J^*(\mathbf{z}) + E(\mathbf{z}),
\label{eq:noisy-obj}
\end{equation}
so a bound on $|E|$ translates directly into a bound on how much $\widetilde{J}$ deviates from $J^*$.
Since coordinate descent guarantees only coordinate-wise stationarity, we present the following comparison with respect to an arbitrary candidate set $\mathcal{H}$ and an explicit optimization error term $\eta_{\mathrm{opt}}$.

\begin{theorem}[Oracle Objective Regret]
\label{thm:regret}
Let $\mathbf{z}_{\mathcal{H}}^*$ minimize $J^*$ over a candidate set $\mathcal{H}$.
Suppose the output $\widehat{\mathbf{z}}\in\mathcal{H}$ satisfies $\widetilde{J}(\widehat{\mathbf{z}})\le\min_{\mathbf{z}\in\mathcal{H}}\widetilde{J}(\mathbf{z})+\eta_{\mathrm{opt}}$.
If $|E(\mathbf{z})|\le\varepsilon$ for all $\mathbf{z}\in\mathcal{H}$, then
\begin{equation}
J^*(\widehat{\mathbf{z}})-J^*(\mathbf{z}_{\mathcal{H}}^*)
\le \eta_{\mathrm{opt}}+2\varepsilon.
\label{eq:regret}
\end{equation}
\end{theorem}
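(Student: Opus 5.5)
The argument will be the standard three-step sandwich that relates the noisy and oracle objectives through the identity~\eqref{eq:noisy-obj}, $\widetilde{J}(\mathbf{z})=J^*(\mathbf{z})+E(\mathbf{z})$. The uniform bound $|E(\mathbf{z})|\le\varepsilon$ on $\mathcal{H}$ turns this identity into two one-sided inequalities: $J^*(\mathbf{z})\le\widetilde{J}(\mathbf{z})+\varepsilon$ and $\widetilde{J}(\mathbf{z})\le J^*(\mathbf{z})+\varepsilon$ for every $\mathbf{z}\in\mathcal{H}$. I will apply the first at the output $\widehat{\mathbf{z}}$ and the second at the oracle minimizer $\mathbf{z}_{\mathcal{H}}^*$. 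Both points lie in $\mathcal{H}$, which is exactly why the theorem requires $\widehat{\mathbf{z}}\in\mathcal{H}$.

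The chain of inequalities is as follows. First, $J^*(\widehat{\mathbf{z}})\le\widetilde{J}(\widehat{\mathbf{z}})+\varepsilon$ by the first inequality. Next, the near-optimality hypothesis gives $\widetilde{J}(\widehat{\mathbf{z}})\le\min_{\mathbf{z}\in\mathcal{H}}\widetilde{J}(\mathbf{z})+\eta_{\mathrm{opt}}$. The minimum is at most the value at any feasible point, so in particular $\min_{\mathbf{z}\in\mathcal{H}}\widetilde{J}(\mathbf{z})\le\widetilde{J}(\mathbf{z}_{\mathcal{H}}^*)$. Finally, $\widetilde{J}(\mathbf{z}_{\mathcal{H}}^*)\le J^*(\mathbf{z}_{\mathcal{H}}^*)+\varepsilon$ by the second inequality. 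Concatenating these gives
\[
J^*(\widehat{\mathbf{z}})\le J^*(\mathbf{z}_{\mathcal{H}}^*)+\eta_{\mathrm{opt}}+2\varepsilon,
\]
which rearranges to~\eqref{eq:regret}.

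There is no real obstacle here. The only points needing care are the following.
\begin{itemize}
\item The minimum of $\widetilde{J}$ over $\mathcal{H}$ must exist. This holds because $\mathcal{H}\subseteq\mathcal{Z}$ is finite, or else one can read the minimum as an infimum and the argument is unchanged.
\item The bound on $|E|$ is needed only at the two points $\widehat{\mathbf{z}}$ and $\mathbf{z}_{\mathcal{H}}^*$, not uniformly over all of $\mathcal{H}$.
\end{itemize}

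Afterwards I will remark how the result composes with earlier ones. Taking $\varepsilon=\sum_t\lambda_tR_t$ from Theorem~\ref{thm:perturbation}(i) gives a deterministic regret bound. Taking $\varepsilon=\varepsilon_{\mathrm{noise},\delta}$ from part~(iii) gives a bound that holds with probability at least $1-\delta$. When $\mathcal{H}$ is a set on which coordinate-wise stationarity (Corollary~\ref{cor:stationary}) yields $\eta_{\mathrm{opt}}=0$, such as the singleton $\{\widehat{\mathbf{z}}\}$ together with its single-move neighbours, the regret is controlled by $2\varepsilon$ alone.
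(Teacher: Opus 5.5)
Your proof is correct and is the same three-step sandwich the paper uses, $J^*(\widehat{\mathbf{z}})\le\widetilde{J}(\widehat{\mathbf{z}})+\varepsilon\le\widetilde{J}(\mathbf{z}_{\mathcal{H}}^*)+\eta_{\mathrm{opt}}+\varepsilon\le J^*(\mathbf{z}_{\mathcal{H}}^*)+\eta_{\mathrm{opt}}+2\varepsilon$; you additionally make explicit the step $\min_{\mathcal{H}}\widetilde{J}\le\widetilde{J}(\mathbf{z}_{\mathcal{H}}^*)$ and observe that the bound on $|E|$ is only needed at the two points $\widehat{\mathbf{z}}$ and $\mathbf{z}_{\mathcal{H}}^*$. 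In your closing remark about $\eta_{\mathrm{opt}}=0$, restrict the single-move neighbours to legal moves that keep every cluster nonempty, as in Corollary~\ref{cor:stationary}, and note that this requires the solver to have run to convergence rather than stopping at an iteration cap.
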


\begin{proof}
Using~\eqref{eq:noisy-obj} and $\sup|E|\le\varepsilon$:
\begin{align*}
J^*(\widehat{\mathbf{z}})
&\le \widetilde{J}(\widehat{\mathbf{z}}) + \varepsilon
  &&\text{(from } \widetilde{J} = J^* + E\text{)}\\
&\le \widetilde{J}(\mathbf{z}_\mathcal{H}^*) + \eta_{\mathrm{opt}}
  + \varepsilon
  &&\text{($\eta_{\mathrm{opt}}$-optimality of }\widehat{\mathbf{z}}\text{)}\\
&\le J^*(\mathbf{z}_\mathcal{H}^*) + \eta_{\mathrm{opt}}
  + 2\varepsilon.
  &&\text{(from } \widetilde{J} = J^* + E\text{ again)}
\end{align*}
Subtracting $J^*(\mathbf{z}_\mathcal{H}^*)$ from both sides gives~\eqref{eq:regret}.
\end{proof}

Applying Theorem~\ref{thm:perturbation}(iii), with probability at least $1-\delta$,
\begin{equation}
J^*(\widehat{\mathbf{z}}) - J^*(\mathbf{z}_\mathcal{H}^*)
\;\le\; \eta_{\mathrm{opt}} + 2\varepsilon_{\mathrm{noise},\delta}.
\label{eq:main-bound}
\end{equation}
This decomposes the total oracle regret into two interpretable terms.
The first, $\eta_{\mathrm{opt}}$, is the optimization error: the gap between the coordinate-wise stationary point found by \gpk and the best assignment in $\mathcal{H}$ under the noisy objective.
The second, $2\varepsilon_{\mathrm{noise},\delta}$, captures noise amplification.
Because $\varepsilon_{\mathrm{noise},\delta}$ is monotone and positively homogeneous in the nonnegative penalty coefficients, stronger penalties enlarge the certified objective distortion caused by noisy groups.
This bound isolates the noise cost of increasing the penalties; the corresponding benefit from clean constraints is data-dependent and is examined empirically in Section~\ref{sec:results}.

\noindent\textit{\underline{\textbf{Remarks.}}}
Once the LLM-generated ML and CL constraints are organized into groups, \gpk solves the constraints clustering problem using cached coordinate descent and Eq.~\eqref{eq:delta-J}.
Overall, the analysis establishes finite-step convergence of this optimizer and quantifies how bounded group penalties control the objective distortion caused by erroneous LLM constraints.

\section{Experiments}
\label{sec:experiments}

The experiments are organized around four questions:
\begin{itemize}[leftmargin=*,nosep]
  \item  Does \gpk improve clustering quality over LLM-based clustering methods and constrained-clustering baselines?
  \item Are the generated constraints sufficiently reliable, and does a higher-quality constraint source improve downstream clustering when the optimizer is held fixed?
  \item Does \gpk exhibit the intended behavior under noisy constraints, such that satisfying more generated constraints does not necessarily yield better clustering?
  \item How many LLM calls does the proposed pipeline require, and how much time does it take?
\end{itemize}
We answer these questions by evaluating clustering quality, analyzing constraint reliability and penalty robustness, and measuring LLM-call counts and running time.
Our code is publicly available. \footnote{\url{https://github.com/NieFeiteng/GPKmeans.git}}

\subsection{Experimental Setup}
\label{sec:setup}

\subsubsection{Datasets.}
We evaluate on nine datasets spanning intent discovery, domain classification, emotion recognition, and social-media topic clustering (Table~\ref{tab:datasets}).
The evaluation includes \Banking~\cite{casanueva2020banking77}, \CLINC~\cite{larson2019clinc}, \Massive~\cite{fitzgerald2023massive}, \MTOP~\cite{li2021mtop}, \GoEmotions~\cite{demszky2020goemotions}, and \Reddit clustering data from the MTEB benchmark~\cite{muennighoff2023mteb}.
These datasets cover both fine-grained intent spaces and coarser domain or topic spaces, allowing us to assess local semantic constraints across different clustering granularities.
All methods use the same embedding file for each dataset, produced by INSTRUCTOR-large~\cite{su2023instructor}.

\begin{table}[t]
  \centering
  \caption{An overview of the datasets.}
  \vspace{-1.4em}
  \label{tab:datasets}
  \resizebox{0.90\linewidth}{!}{\centering
\setlength{\tabcolsep}{13pt} 
\begin{tabular}{lrrr}
\toprule
Dataset & Samples & $k$ & Dim. \\
\midrule
\Banking~\cite{casanueva2020banking77} & 3080 & 77 & 768 \\
\CLINCI~\cite{larson2019clinc} & 4500 & 150 & 768 \\
\MassiveI~\cite{fitzgerald2023massive} & 2974 & 59 & 768 \\
\MTOPI~\cite{li2021mtop} & 4386 & 102 & 768 \\
\CLINCD~\cite{larson2019clinc} & 4500 & 10 & 768 \\
\MassiveD~\cite{fitzgerald2023massive} & 2974 & 18 & 768 \\
\MTOPD~\cite{li2021mtop} & 4386 & 11 & 768 \\
\GoEmo~\cite{demszky2020goemotions} & 5940 & 27 & 768 \\
\Reddit~\cite{muennighoff2023mteb} & 3217 & 50 & 768 \\
\bottomrule
\end{tabular}
}
  \vspace{-1.8em}
\end{table}

\begin{table*}[t]
  \centering
  \scriptsize
  \setlength{\tabcolsep}{3pt}
  \caption{ACC results (\%). Dataset columns report mean $\pm$ standard deviation over five runs, and Avg is the unweighted mean across datasets. Bold and underline denote the best and second-best scores; $\dagger$ marks incomplete \COPKMeans runs.}
  \vspace{-1.9em}
  \label{tab:main-acc}
  \resizebox{0.98\textwidth}{!}{\begin{tabular}{lcccccccccc}
\toprule
Method & \Banking & \CLINCI & \MassiveI & \MTOPI & \CLINCD & \MassiveD & \MTOPD & \GoEmo & \Reddit & Avg \\
\midrule
\Lloyd & 63.83$\pm$1.71 & 79.25$\pm$1.46 & 55.82$\pm$2.33 & 31.74$\pm$1.20 & 50.69$\pm$3.76 & 57.35$\pm$2.62 & 83.32$\pm$5.64 & 13.70$\pm$0.93 & 42.01$\pm$1.64 & 53.08 \\
\LSCK & \underline{68.49$\pm$0.88} & 81.99$\pm$0.47 & 58.22$\pm$0.42 & 33.59$\pm$1.70 & \underline{56.59$\pm$4.83} & \underline{64.60$\pm$2.08} & \textbf{92.98$\pm$0.04} & 14.06$\pm$0.62 & 44.06$\pm$2.49 & \underline{57.18} \\
\LSCKHC & 64.54$\pm$1.43 & 77.88$\pm$1.45 & 54.37$\pm$2.29 & 32.53$\pm$0.83 & 52.97$\pm$4.36 & 59.91$\pm$1.78 & 80.47$\pm$5.30 & 14.14$\pm$0.30 & 43.13$\pm$0.50 & 53.33 \\
\SummaryCentroids & 65.97$\pm$0.48 & 81.27$\pm$0.25 & \textbf{60.13$\pm$1.55} & \textbf{34.88$\pm$1.01} & 55.10$\pm$0.28 & 59.76$\pm$2.08 & 92.70$\pm$0.06 & 14.31$\pm$0.74 & 42.96$\pm$1.08 & 56.34 \\
\PPLL & 65.09$\pm$1.37 & \underline{82.85$\pm$0.74} & 58.79$\pm$0.40 & 34.28$\pm$0.07 & 55.72$\pm$2.01 & 61.41$\pm$3.98 & 92.68$\pm$0.16 & \underline{14.82$\pm$0.19} & \underline{45.65$\pm$0.83} & 56.81 \\
\PCCC & 63.78$\pm$1.20 & 75.74$\pm$1.42 & 54.80$\pm$3.45 & 30.76$\pm$1.23 & 53.52$\pm$1.85 & 55.65$\pm$4.72 & 87.56$\pm$8.66 & 14.40$\pm$0.05 & 41.11$\pm$2.04 & 53.03 \\
\COPKMeans & 56.26$\pm$1.32 & 69.75$\pm$1.31$^{\dagger}$ & 52.88$\pm$0.02$^{\dagger}$ & \underline{34.55$\pm$1.11} & 53.63$\pm$4.60 & 59.26$\pm$0.74 & 82.95$\pm$3.56 & 14.33$\pm$0.44 & N/A$^{\dagger}$ & 52.95$^{\dagger}$ \\
\ours & \textbf{69.33$\pm$0.89} & \textbf{84.34$\pm$0.64} & \underline{60.12$\pm$0.78} & 34.54$\pm$0.66 & \textbf{59.59$\pm$1.71} & \textbf{64.82$\pm$0.28} & \underline{92.84$\pm$0.00} & \textbf{15.51$\pm$0.10} & \textbf{47.62$\pm$0.36} & \textbf{58.75} \\
\bottomrule
\end{tabular}
}
  \vspace{-1.7em}
\end{table*}

\begin{table*}[t]
  \centering
  \scriptsize
  \setlength{\tabcolsep}{3pt}
  \caption{NMI results (\%) under the same protocol as Table~\ref{tab:main-acc}.}
  \vspace{-1.9em}
  \label{tab:main-nmi}
  \resizebox{0.98\textwidth}{!}{\begin{tabular}{lcccccccccc}
\toprule
Method & \Banking & \CLINCI & \MassiveI & \MTOPI & \CLINCD & \MassiveD & \MTOPD & \GoEmo & \Reddit & Avg \\
\midrule
\Lloyd & 82.01$\pm$0.63 & 92.75$\pm$0.22 & 74.36$\pm$0.75 & 67.52$\pm$0.34 & 49.46$\pm$3.01 & 62.59$\pm$1.11 & 83.61$\pm$2.48 & 11.36$\pm$0.76 & 49.81$\pm$0.76 & 63.72 \\
\LSCK & \underline{83.22$\pm$0.45} & 93.14$\pm$0.18 & 74.88$\pm$0.62 & 67.61$\pm$0.54 & 51.67$\pm$2.71 & 64.44$\pm$0.54 & \textbf{88.14$\pm$0.03} & 11.27$\pm$0.35 & 50.57$\pm$1.30 & 64.99 \\
\LSCKHC & 81.82$\pm$0.65 & 92.47$\pm$0.21 & 74.00$\pm$0.34 & 67.64$\pm$0.60 & 49.94$\pm$1.89 & 63.75$\pm$0.17 & 83.32$\pm$1.86 & 10.90$\pm$0.17 & 50.30$\pm$0.35 & 63.79 \\
\SummaryCentroids & 82.65$\pm$0.39 & 93.22$\pm$0.08 & \underline{75.64$\pm$0.12} & 68.13$\pm$0.34 & 51.18$\pm$1.32 & 63.90$\pm$1.32 & 87.89$\pm$0.07 & \textbf{11.71$\pm$0.39} & 49.60$\pm$0.71 & 64.88 \\
\PPLL & 82.14$\pm$0.52 & \underline{93.72$\pm$0.23} & 75.39$\pm$0.40 & \textbf{68.49$\pm$0.33} & \underline{52.45$\pm$1.16} & \underline{65.05$\pm$1.16} & 87.88$\pm$0.24 & 11.38$\pm$0.38 & \underline{51.81$\pm$0.30} & \underline{65.37} \\
\PCCC & 81.49$\pm$0.23 & 90.40$\pm$0.65 & 73.26$\pm$1.81 & 66.80$\pm$0.28 & 51.06$\pm$1.45 & 62.11$\pm$1.48 & 85.16$\pm$4.22 & 11.16$\pm$0.34 & 49.35$\pm$0.97 & 63.42 \\
\COPKMeans & 78.67$\pm$0.50 & 89.94$\pm$0.12$^{\dagger}$ & 72.42$\pm$0.22$^{\dagger}$ & 67.07$\pm$0.15 & 49.66$\pm$1.10 & 63.54$\pm$0.47 & 83.60$\pm$1.79 & 11.21$\pm$0.46 & N/A$^{\dagger}$ & 64.51$^{\dagger}$ \\
\ours & \textbf{83.71$\pm$0.44} & \textbf{94.01$\pm$0.15} & \textbf{75.71$\pm$0.68} & \underline{68.36$\pm$0.39} & \textbf{53.63$\pm$1.38} & \textbf{65.91$\pm$0.27} & \underline{88.12$\pm$0.03} & \underline{11.69$\pm$0.11} & \textbf{52.47$\pm$0.38} & \textbf{65.96} \\
\bottomrule
\end{tabular}
}
  \vspace{-1.7em}
\end{table*}

\begin{table*}[t]
  \centering
  \scriptsize
  \setlength{\tabcolsep}{3pt}
  \caption{ARI results (\%) under the same protocol as Table~\ref{tab:main-acc}.}
  \vspace{-1.8em}
  \label{tab:main-ari}
  \resizebox{0.98\textwidth}{!}{\begin{tabular}{lcccccccccc}
\toprule
Method & \Banking & \CLINCI & \MassiveI & \MTOPI & \CLINCD & \MassiveD & \MTOPD & \GoEmo & \Reddit & Avg \\
\midrule
\Lloyd & 54.50$\pm$1.59 & 75.01$\pm$1.08 & 43.44$\pm$2.52 & 24.25$\pm$1.10 & 33.51$\pm$3.25 & 41.89$\pm$2.24 & 75.69$\pm$5.52 & 3.76$\pm$0.42 & 25.59$\pm$1.24 & 41.96 \\
\LSCK & \underline{58.10$\pm$1.03} & 76.69$\pm$1.04 & 45.38$\pm$0.07 & 26.04$\pm$0.69 & \underline{38.57$\pm$2.94} & \textbf{51.24$\pm$1.60} & \textbf{85.87$\pm$0.18} & 3.77$\pm$0.46 & 27.08$\pm$1.67 & \underline{45.86} \\
\LSCKHC & 54.37$\pm$1.42 & 73.30$\pm$1.00 & 42.48$\pm$1.70 & 24.09$\pm$0.15 & 35.66$\pm$2.94 & 45.46$\pm$1.17 & 74.21$\pm$3.71 & 4.09$\pm$0.17 & 26.48$\pm$0.03 & 42.24 \\
\SummaryCentroids & 56.44$\pm$1.01 & 76.57$\pm$0.39 & \underline{46.76$\pm$1.09} & \textbf{28.90$\pm$2.19} & 36.82$\pm$1.34 & 44.47$\pm$3.11 & 85.37$\pm$0.13 & 4.00$\pm$0.35 & 25.72$\pm$0.48 & 45.01 \\
\PPLL & 54.78$\pm$1.20 & \underline{77.96$\pm$1.28} & 45.41$\pm$0.49 & 26.43$\pm$0.58 & 37.00$\pm$0.56 & 45.29$\pm$4.31 & 85.42$\pm$0.36 & \underline{4.32$\pm$0.50} & \underline{27.08$\pm$0.74} & 44.85 \\
\PCCC & 54.05$\pm$0.54 & 69.76$\pm$1.44 & 42.72$\pm$3.42 & 23.59$\pm$0.14 & 36.06$\pm$0.77 & 39.94$\pm$3.57 & 80.65$\pm$7.59 & 3.99$\pm$0.25 & 24.83$\pm$1.55 & 41.73 \\
\COPKMeans & 47.17$\pm$1.58 & 65.62$\pm$1.01$^{\dagger}$ & 42.67$\pm$0.25$^{\dagger}$ & \underline{27.00$\pm$2.23} & 35.34$\pm$2.02 & 45.63$\pm$0.96 & 76.36$\pm$2.75 & 4.03$\pm$0.55 & N/A$^{\dagger}$ & 42.98$^{\dagger}$ \\
\ours & \textbf{59.11$\pm$0.61} & \textbf{79.32$\pm$0.84} & \textbf{46.85$\pm$1.51} & 26.31$\pm$0.67 & \textbf{39.35$\pm$2.21} & \underline{48.22$\pm$1.12} & \underline{85.61$\pm$0.06} & \textbf{4.39$\pm$0.34} & \textbf{28.36$\pm$0.46} & \textbf{46.39} \\
\bottomrule
\end{tabular}
}
  \vspace{-1.3em}
\end{table*}

\subsubsection{Compared methods.}
We organize the compared methods by the role of the LLM and the source of the constraints.
\Lloyd applies standard Lloyd iterations to the same input embeddings.
\SummaryCentroids~\cite{rodriguez2025kllmmeans} periodically summarizes each current cluster with an LLM and re-embeds the resulting text as a centroid, whereas \PPLL~\cite{zhang2025ppll} uses LLM-generated mini-cluster labels to construct positive pairs, trains an embedder with BYOL, and then performs label-guided $k$-means.
They represent alternative ways of injecting LLM knowledge into clustering.
\LSCK and \LSCKHC~\cite{jia2026optimized} generate set-valued ML and CL constraints with their own LLM pipeline and optimize them using penalty-based local search.
\LSCKHC makes additional LLM calls to divide ML sets into hard and soft constraints: hard sets guide initialization and soft sets are handled through penalties.
These methods therefore compare complete LLM-constraint-generation and clustering pipelines.
We convert the ML and CL constraints generated by \gpk into the corresponding input formats required by \PCCC~\cite{baumann2025pccc} and \COPKMeans~\cite{wagstaff2001constrained}, replacing only the solver.
\PCCC uses an integer-programming-based constrained assignment procedure, whereas \COPKMeans treats the supplied relations as hard feasibility constraints.
\OursLSCKConstraints applies the \gpk optimizer to the constraints generated by \LSCK, thereby isolating the effect of the constraint source while holding the optimizer fixed.
\OracleClean applies the same \gpk optimizer after using true labels to remove generated ML and CL relations that disagree with the label-induced oracle relation.

\subsubsection{Parameter settings.}
Unless otherwise stated, \gpk uses full constraint coverage, group penalties, at most 80 optimization iterations, and penalty coefficients $\lambda_{\mathrm{ML}}=0.01$ and $\lambda_{\mathrm{CL}}=0.1$.
These coefficients are fixed across datasets unless explicitly varied in the sensitivity study.
For sensitivity and controlled studies, we vary only the tested parameter and keep all other settings fixed.
Baseline methods use their corresponding constraints and the same embedding inputs as \gpk. 

\subsubsection{Evaluation metrics.}
We report clustering accuracy (ACC), normalized mutual information (NMI), adjusted Rand index (ARI), constraint violation rates, LLM calls, and runtime.
ACC is computed after Hungarian matching between predicted clusters and true labels.
NMI and ARI evaluate label agreement without requiring a one-to-one cluster assignment.
ACC, NMI, and ARI are reported as percentages.
Constraint violation rates measure how often the final clustering disagrees with the generated ML or CL constraints.
Runtime measures only downstream clustering time; offline LLM generation is reported separately through call counts.

\begin{figure*}[!t]
  \centering
  \includegraphics[width=\textwidth]{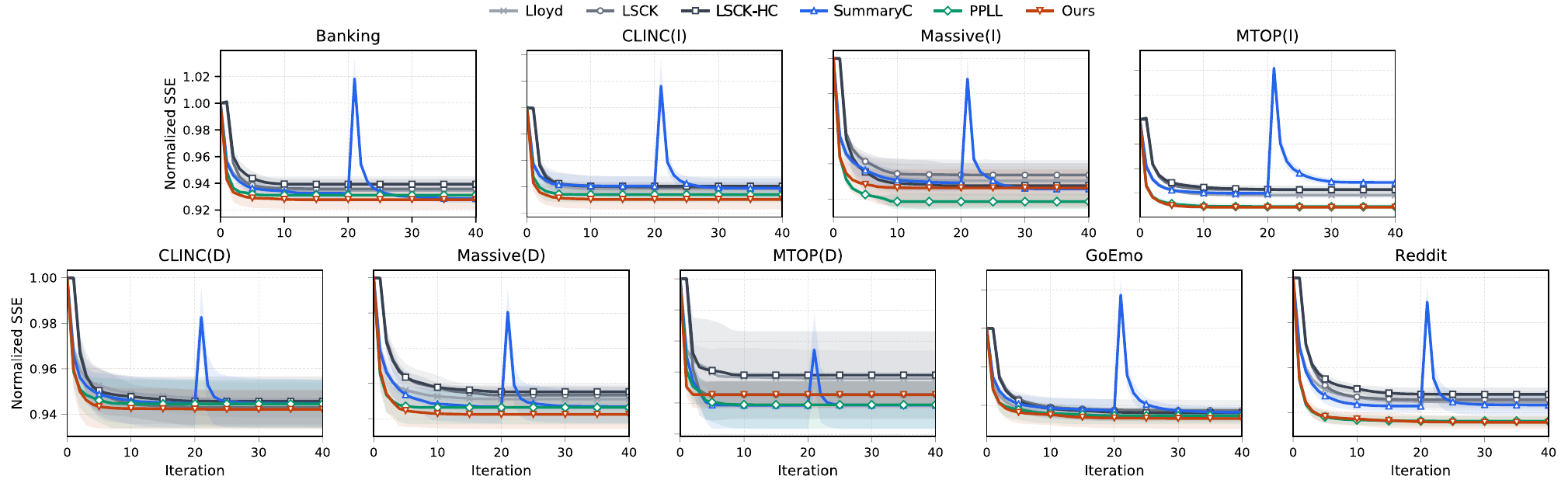}
  \vspace{-2.8em}
  \caption{Normalized SSE over 40 iterations on the nine evaluation datasets.}
  \label{fig:sse}
  \vspace{-1.7em}
\end{figure*}

\subsubsection{Implementation.}
All methods use the same dataset and embedding files. Methods that accept externally supplied constraints use the constraint inputs specified above.
We use five matched random seeds per dataset whenever a method is stochastic and report the mean and standard deviation; $\dagger$ denotes cases with fewer completed runs.
For methods with stored LLM outputs, including \SummaryCentroids and \PPLL, we replay those outputs under the same dataset and embedding protocol.
Runtime-heavy analyses and ablations use matched repeated runs per dataset.
Each solver run has a one-hour timeout.
All LLM constraints are generated offline with \texttt{deepseek-v4-flash} before clustering.
We follow Section~\ref{sec:method:anchor} to generate ML constraints.
For CL constraints, we use the separation sets generated by the procedure of \LSCK~\cite{jia2026optimized}.
The archived ML sources use per-cluster anchor quotas of one, three, or five and evaluate up to 12 candidates per anchor.
This fixed constraint set separates downstream optimizer behavior from the cost and variance of new LLM calls.
All experiments were conducted on a CentOS Linux 7 server (kernel 3.10.0) equipped with two Intel Xeon Platinum 8352V processors and 512~GB of DDR4 ECC memory.

\subsection{Experimental Results}
\label{sec:results}

\begin{table}[t]
  \centering
  \caption{Pair-weighted constraint accuracy (\%).}
  \vspace{-1.4em}
  \label{tab:constraint-quality}
  \resizebox{0.86\linewidth}{!}{\begin{tabular}{lrrrrr}
\toprule
Source & ML Acc. (\%) & CL Acc. (\%) & Overall (\%) & ML Pairs & CL Pairs \\
\midrule
\OracleClean & 100.00 & 100.00 & 100.00 & 8,360 & 34,459 \\
\LSCK & 72.57 & 97.75 & 86.66 & 27,742 & 35,252 \\
\ours & 88.91 & 97.75 & 95.89 & 9,403 & 35,252 \\
\bottomrule
\end{tabular}
}
  \vspace{-1.8em}
\end{table}

\subsubsection{Clustering Quality}
Tables~\ref{tab:main-acc} to~\ref{tab:main-ari} report the main comparison on nine datasets.
The tables report the mean and standard deviation over five runs, except for the incomplete \COPKMeans cases marked by $\dagger$.
Following the final baseline setting, \LSCK and \LSCKHC use penalty coefficient $0.001$.
\SummaryCentroids and \PPLL use stored outputs from their own LLM-based pipelines for the main metric tables; their SSE traces are discussed in Figure~\ref{fig:sse}.
\PCCC and \COPKMeans use the same ML and CL constraints as \gpk, so they test downstream solver behavior under fixed constraints.

Overall, \gpk achieves the best macro average on all three external clustering metrics.
It obtains $58.75\%$ ACC, $65.96\%$ NMI, and $46.39\%$ ARI, improving over the strongest non-\gpk macro average by $1.57\%$, $0.59\%$, and $0.53\%$, respectively.
The gains are moderate rather than uniform across all datasets, as expected, because the embedding space is already strong and the generated constraints are sparse.
The important trend is that the same bounded-penalty formulation improves the macro average across ACC, NMI, and ARI, rather than improving only one matching-sensitive metric.
The \Lloyd row shows that $k$-means is already a strong reference, but it remains below \gpk by $5.67\%$, $2.24\%$, and $4.43\%$ in ACC, NMI, and ARI, respectively.

Compared with \LSCK and \LSCKHC, \gpk gives a more favorable quality-robustness trade-off.
\LSCK remains competitive and obtains the best scores on \MTOPD, showing that hard or high-strength constraints can help when the generated relations align well with the label structure.
However, \gpk is stronger on the macro average and performs especially well on \CLINCI, \CLINCD, \GoEmo, and \Reddit.
This validates treating LLM-generated constraints as semantic preferences rather than strict requirements, since residual noise or redundancy in local constraints can degrade clustering quality when enforced.

The comparison with \SummaryCentroids and \PPLL shows that the advantage of \gpk is not simply due to using an LLM somewhere in the clustering pipeline.
Those methods use LLM-generated summaries or label-oriented guidance, whereas \gpk keeps the LLM role narrow and reusable: it generates local pairwise constraints and leaves the assignment optimization to a regularized $k$-means objective.
The results suggest that such local constraints can be effective when their influence is bounded and optimized jointly with the embedding-based SSE term.

The same-constraint solver baselines provide a more controlled comparison.
Under identical generated ML and CL constraints, \gpk outperforms \PCCC by $5.72\%$, $2.54\%$, and $4.66\%$ in ACC, NMI, and ARI, respectively.
On the eight datasets with reported \COPKMeans results, the matched macro averages favor \gpk by $7.19\%$, $3.13\%$, and $5.67\%$, respectively.
Since the constraint input is fixed in this comparison, the difference is attributable to how the downstream solver uses the constraints.
These results suggest that noisy LLM-generated constraints are better modeled as normalized soft penalties than as hard feasibility constraints.

\begin{figure}[!t]
  \centering
  \resizebox{0.86\linewidth}{!}{
  \includegraphics[width=\linewidth]{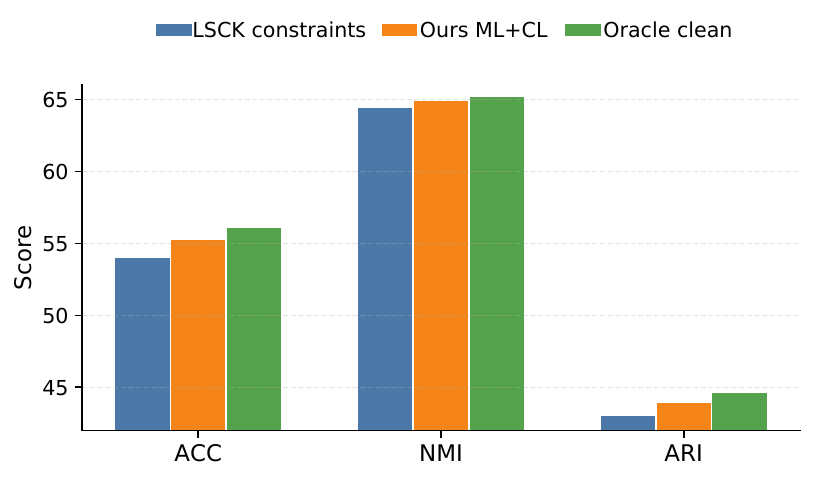}
  }
  \vspace{-1.9em}
  \caption{Effect of constraint source with the \gpk optimizer fixed.}
  \label{fig:constraint-effectiveness}
  \vspace{-2.0em}
\end{figure}

Figure~\ref{fig:sse} reports normalized SSE from the intermediate assignments, with each run's initial value scaled to 1.
For \gpk, \Lloyd, \LSCK, and \LSCKHC, curves show the mean with shaded $\pm 1$ standard deviation across repeated runs.
For \SummaryC and \PPLL, we compute SSE from their stored intermediate assignments.
Because SSE is only one component of the \gpk objective, the curves compare geometric compactness rather than convergence of a shared objective.
\gpk consistently reduces SSE in the first few iterations and usually achieves one of the lowest SSE values among the compared methods.

\subsubsection{Constraint Reliability and Utility}

Table~\ref{tab:constraint-quality} evaluates the generated constraints against true labels over the same nine datasets.
This analysis is important because downstream clustering metrics start with strong embeddings and may show only moderate gains, even when the constraint source is substantially cleaner.
The anchor-based constraint source used by \gpk improves pair-weighted ML accuracy from 72.57\% for \LSCK constraints to 88.91\%.
Because \gpk uses the same set of CL pairs as \LSCK, both methods achieve the same 97.75\% CL accuracy.
The higher-quality ML source raises overall constraint accuracy from 86.66\% to 95.89\%.

This result clarifies the role of anchor-based querying.
It is not a general active-selection method for finding globally optimal query pairs; rather, it provides a sparse, local, and reliable constraint source for the downstream optimizer.
The remaining gap between \gpk and the \OracleClean shows why generated constraints should not be treated as oracle facts: even after improving the ML source, residual invalid constraints remain.

Figure~\ref{fig:constraint-effectiveness} then keeps the optimizer fixed and changes only the constraint input.
The ordering is consistent across ACC, NMI, and ARI: \OracleClean is highest, the deployable \gpk setting is next, and replacing our constraint source with \LSCK constraints is lower.
This comparison should not be read as a comparison against an additional usable baseline; instead, it measures how much of the potential benefit of clean constraints is already recovered by the proposed constraint source.
The small gap between \gpk and \OracleClean suggests that the default constraints capture much of the useful semantic signal.
Further gains may require more targeted boundary-aware selection rather than simply increasing the number of local constraints.

\begin{figure}[t]
  \centering
  \includegraphics[width=\linewidth]{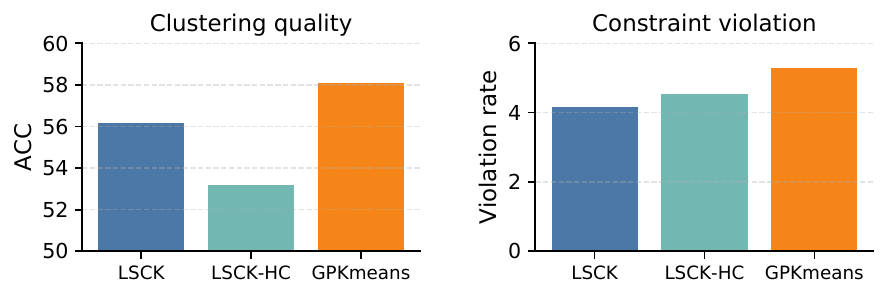}
  \vspace{-2.0em}
  \caption{ACC and violation rates (\%) measured against a common constraint set.}
  \label{fig:soft-penalty}
  \vspace{-1.8em}
\end{figure}

\begin{table}[t]
  \centering
  \caption{Mean scores under increasing oracle constraints.}
\vspace{-1.3em}
  \label{tab:coverage}
  \scriptsize
  \setlength{\tabcolsep}{3pt}
  \resizebox{\linewidth}{!}{\begin{tabular}{lccccccccc}
\toprule
Metric (\%) & 0\% & 10\% & 20\% & 30\% & 40\% & 50\% & 60\% & 70\% & 80\% \\
\midrule
ACC & 62.65 & 67.17 & 70.03 & 73.09 & 76.99 & 79.58 & 87.27 & 94.65 & 100.00 \\
NMI & 84.38 & 86.03 & 86.79 & 88.10 & 89.50 & 90.65 & 94.30 & 97.56 & 100.00 \\
ARI & 45.02 & 51.45 & 54.44 & 59.63 & 64.40 & 68.85 & 80.44 & 91.64 & 100.00 \\
\bottomrule
\end{tabular}
}
  \vspace{-2.5em}
\end{table}
\begin{table*}[t]
  \centering
  \caption{Mean runtime per run (seconds), with a one-hour timeout per run. $\dagger$ marks incomplete \COPKMeans runs.}
  \vspace{-1.4em}
  \label{tab:runtime}
  \resizebox{0.96\textwidth}{!}{\begin{tabular}{lcccccccccc}
\toprule
Method & \Banking & \CLINCI & \MassiveI & \MTOPI & \CLINCD & \MassiveD & \MTOPD & \GoEmo & \Reddit & Avg \\
\midrule
\LSCK & 81.46 & 437.07 & 79.88 & 201.78 & 22.99 & 23.47 & 10.62 & 121.84 & 71.82 & 116.77 \\
\LSCKHC & 45.53 & 250.22 & 66.99 & 284.51 & 15.83 & 24.52 & 9.71 & 152.26 & 74.98 & 102.73 \\
\PCCC & 168.92 & 338.93 & 186.37 & 202.34 & 11.35 & 26.47 & 15.39 & 33.11 & 172.94 & 128.42 \\
\COPKMeans & 501.14 & 1315.16$^{\dagger}$ & 424.64$^{\dagger}$ & 611.69 & 116.70 & 106.51 & 120.72 & 477.75 & N/A$^{\dagger}$ & 459.29$^{\dagger}$ \\
\ours & 2.29 & 6.04 & 3.17 & 3.53 & 1.97 & 2.19 & 1.21 & 7.52 & 2.97 & 3.43 \\
\bottomrule
\end{tabular}
}
  \vspace{-1.5em}
\end{table*}

\begin{figure}[t]
  \centering
  \includegraphics[width=\linewidth]{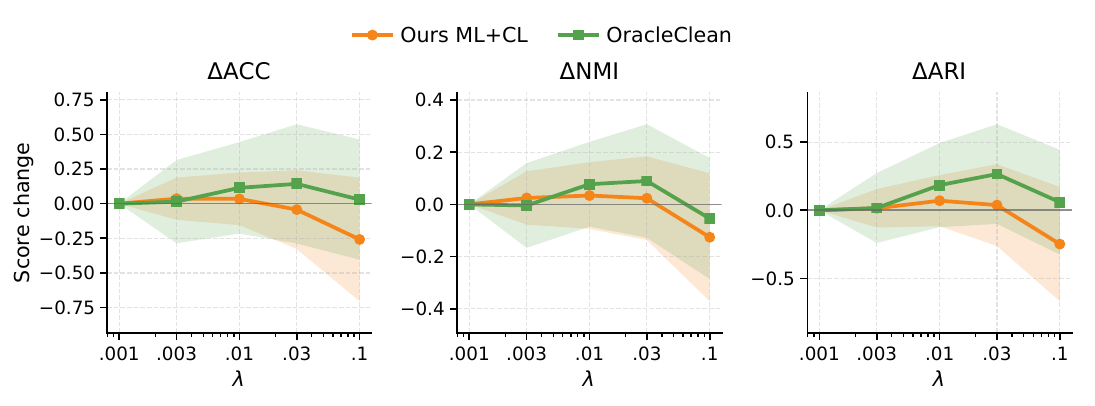}
  \vspace{-2.8em}
  \caption{Sensitivity of \gpk under different values of $\lambda$($\lambda_{\mathrm{ML}}=\lambda_{\mathrm{CL}}=\lambda$).}
  \label{fig:penalty-sensitivity}
  \vspace{-1.8em}
\end{figure}

\subsubsection{Robustness to Noisy Constraints}

Figure~\ref{fig:soft-penalty} examines the relationship between clustering quality and constraint satisfaction by evaluating every method's output against the same generated constraint set.
A lower violation rate is not necessarily better when the constraints are noisy.
\LSCK and \LSCKHC have lower violation rates on this common evaluation set, but their ACC is lower than that of \gpk.
In contrast, \gpk has a higher residual violation rate while achieving the highest ACC in the comparison.
This result is consistent with the proposed formulation: bounded soft penalties can use reliable constraints to correct ambiguous assignments without forcing every noisy relation to be satisfied.

To isolate the effect of constraint coverage, we conduct a controlled study on a balanced \Banking subset.
The subset contains 77 intents with five examples per intent.
We generate the full set of 73,920 oracle ML/CL pairs from gold labels and evaluate nested prefixes of one fixed random ordering from 10\% to 80\%.
The 0\% column in Table~\ref{tab:coverage} reports embedding-only $k$-means++, while the remaining columns report \gpk under increasing oracle-constraint coverage.

Table~\ref{tab:coverage} shows a clear coverage-dependent trend.
Without constraints, $k$-means++ obtains $62.65\%$ ACC, $84.38\%$ NMI, and $45.02\%$ ARI.
Using only 10\% of the oracle constraints improves the three metrics to $67.17\%$, $86.03\%$, and $51.45\%$, respectively.
As coverage increases, the gains become larger: at 60\% coverage, \gpk reaches $87.27\%$ ACC and $80.44\%$ ARI, and at 70\% coverage it reaches $94.65\%$ ACC and $91.64\%$ ARI.
At 80\% coverage, the subset is recovered perfectly.
The observed trend suggests that, as more reliable ML/CL constraints become available, the bounded group-penalty objective is increasingly effective at guiding the clustering structure toward the oracle partition.

Figure~\ref{fig:penalty-sensitivity} reports score changes from $\lambda=0.001$ when both penalty coefficients are set to the common value $\lambda$.
Lines show macro-average changes over nine datasets; the shaded region adds and subtracts the mean within-dataset standard deviation.
The sensitivity study complements the perturbation analysis in Section~\ref{sec:theory:perturbation}: larger penalties increase exposure to erroneous constraints, while their benefit from clean constraints remains data-dependent.
For small and moderate penalty coefficients, the score changes are stable and can improve because reliable constraint relations provide useful semantic guidance.
When the penalty becomes too large, however, clustering quality under the default noisy constraints begins to deteriorate.

The \OracleClean curve is a diagnostic rather than a usable tuning target.
When invalid ML constraints are removed using true labels, the optimizer tolerates stronger penalties better and reaches its best region at a larger coefficient.
The cleaner constraints therefore remain useful over a wider penalty range than the default noisy constraints.
This pattern shows that the useful penalty range depends on constraint reliability, not just on the optimizer.

\subsubsection{LLM Calls and Runtime}

\begin{table}[t]
  \centering
  \caption{Offline LLM calls over nine datasets.}
  \vspace{-1.5em}
  \label{tab:query-efficiency}
  \resizebox{0.86\linewidth}{!}{\begin{tabular}{llr}
\toprule
Method & LLM stage & LLM calls \\
\midrule
\LSCK & ML and CL generation & 5,593 \\
\LSCKHC & ML/CL and confidence estimation & 6,430 \\
\SummaryCentroids & cluster summary updates & 1,512 \\
\PPLL & \PPLL label assignment & 5,400 \\
\ours & Anchor ML + CL generation & 1,435 \\
\bottomrule
\end{tabular}
}
  \vspace{-1.8em}
\end{table}

Table~\ref{tab:query-efficiency} compares the number of LLM calls made by the LLM-based methods over the nine datasets.
We count successful prompt requests reconstructed from cached query logs.
\gpk uses 1,435 calls: 154 calls for anchor-based ML generation and 1,281 calls for CL construction.
In comparison, \LSCK uses 5,593 calls for ML and CL generation, while \LSCKHC adds 837 calls for confidence estimation, for a total of 6,430.
All generated constraints are stored and reused by downstream runs, so this table measures LLM-call overhead rather than solver runtime.

Table~\ref{tab:runtime} reports downstream clustering runtime on the same nine datasets used in Tables~\ref{tab:main-acc}-\ref{tab:main-ari}.
As defined above, this comparison excludes offline prompting cost and focuses on optimization time.
\gpk takes 3.43 seconds per run on average, compared with 116.77 seconds for \LSCK, 102.73 seconds for \LSCKHC, and 128.42 seconds for \PCCC.
Thus, under the offline-constraint protocol, \gpk is about $34.0\times$ faster than \LSCK, $30.0\times$ faster than \LSCKHC, and $37.4\times$ faster than \PCCC.
On the eight datasets with reported \COPKMeans times, \gpk averages 3.49 seconds versus 459.29 seconds for \COPKMeans; several \COPKMeans runs time out or return no feasible assignment.
The speedup comes from local-delta updates.
Each candidate reassignment updates the SSE term using only the source and target cluster statistics, and updates the constraint term using only groups incident to the moved sample.
The group-count cache avoids implied-pair expansion and whole-group rescans, keeping the per-sweep cost close to the sparse form analyzed in Section~\ref{sec:method:complexity}.

\section{Conclusion}
\label{sec:conclusion}

This paper presented \gpk, an efficient Group-Penalty $k$-means method for text clustering with noisy LLM-generated constraints.
\gpk represents these constraints as normalized, bounded group penalties and uses cached coordinate descent to evaluate exact local move deltas without repeatedly computing the full objective or expanding groups into implied pairs.
We established finite-step convergence and derived robustness bounds that characterize the effect of constraint noise under varying ML and CL penalty coefficients.
Across datasets, \gpk achieved relative macro-average improvements of $2.75\%$, $0.90\%$, and $1.16\%$ in ACC, NMI, and ARI over the strongest competing result for each metric, while reducing downstream optimization time by $30\times$ relative to the fastest compared constrained-clustering baseline.

Future work should improve the quality of CL generation and study boundary-aware or utility-guided constraint selection, so that LLM calls focus on constraints that are both reliable and likely to change difficult clustering decisions.
Moreover, we will investigate extending the framework to streaming text clustering, where new samples arrive continuously.

\clearpage
\balance
\bibliographystyle{ACM-Reference-Format}
\bibliography{sample}

\clearpage
\nobalance
\begingroup
\captionsetup[table]{skip=3pt}
\captionsetup[figure]{skip=4pt}
\setlength{\textfloatsep}{8pt plus 1pt minus 1pt}
\setlength{\floatsep}{6pt plus 1pt minus 1pt}
\setlength{\dbltextfloatsep}{8pt plus 1pt minus 1pt}
\setlength{\dblfloatsep}{6pt plus 1pt minus 1pt}

\section*{Technical Report: Adaptive Constraint Acquisition}
\paragraph{Overview.}
$k$-means provides an efficient foundation for text clustering, but embedding proximity can miss task-specific semantic boundaries.
LLM-generated must-link (ML) and cannot-link (CL) constraints provide semantic guidance, but offline acquisition ignores subsequent clustering feedback and noisy judgments can mislead optimization.
We extend \gpk with a feedback loop that couples adaptive constraint acquisition with continued optimization.
Neighborhood disagreement, centroid margins and recent reassignments guide anchor selection from the current partition.
Group--Closure queries within the accumulating anchor set, whereas Group--Fanout extends queries to nonanchor samples.
Accepted relations accumulate into ML connected components and CL clique groups, which enter the objective through normalized, bounded soft penalties.
Cached coordinate descent continues from the current assignment after each update.
Experiments with Gold and real LLM judgments on nine datasets compare both designs under matched total pair-query and sweep budgets.
Updating constraints every two sweeps instead of five improves macro-average ACC by 0.33 and 0.39 percentage points for the LLM Closure and Fanout variants, respectively.
Constraint-quality and graph-redundancy analyses reveal how erroneous and redundant relations limit the gains, motivating more reliable and informative queries.

\section{Extension Design}
\label{tr:sec:design}

\subsection{Purpose and relation to the base method}
This technical report accompanies the preceding \gpk paper.
It supplements the main paper with adaptive constraint acquisition during clustering (Section~\ref{tr:sec:design}) and additional experiments using Gold and real LLM judgments (Section~\ref{tr:sec:evaluation}).

Pairwise supervision~\cite{wagstaff2001constrained} and active constraint acquisition~\cite{basu2004active} motivate our extension.
The base \gpk workflow acquires constraints offline and optimizes a fixed group-penalty objective.
Our extension uses the evolving partition to decide which additional relations to request.
Each acquisition is followed by continued optimization of the current assignment, so information obtained in one interval affects both later assignments and later queries.
The two variants share the same objective and acquisition rules except for the eligible query endpoints.

The extension queries both ML and CL pair relations, rebuilds their groups after each acquisition, and enforces an exact sweep budget.
\gpk uses ML links and separation-set CL construction~\cite{jia2026optimized}, with a different acquisition budget and stopping schedule.
We compare update schedules within the adaptive pipeline at matched total pair-query and sweep budgets.

\subsection{Normalized objective and continued optimization}
Let $\mathbf x_i\in\mathbb R^d$ be the embedding of sample $i$, and let $z_i\in\{1,\ldots,k\}$ be its assignment, with every cluster nonempty.
The centroid $\boldsymbol\mu_c$ is the mean of the embeddings in cluster $c$.
For a group $G$ of size $m\ge2$, let $n_c(G)=|\{i\in G:z_i=c\}|$.
The normalized must-link and cannot-link penalties are
\begin{align}
\phi_{\mathrm{ML}}(G;\mathbf z)&=\frac{m-\max_c n_c(G)}{m-1},\label{tr:eq:ml}\\
\phi_{\mathrm{CL}}(G;\mathbf z)&=\frac{\sum_c\binom{n_c(G)}{2}}{\binom{m}{2}}.\label{tr:eq:cl}
\end{align}
Both terms lie in $[0,1]$ and act as soft preferences; samples remain individually movable.
Writing $S(\mathbf z)=\sum_c\sum_{i:z_i=c}\|\mathbf x_i-\boldsymbol\mu_c\|_2^2$, the objective is
\begin{equation}
\begin{aligned}
J(\mathbf z)&=S(\mathbf z)+\lambda_{\mathrm{ML}}\sum_{G\in\mathcal M}\phi_{\mathrm{ML}}(G;\mathbf z)\\
&\quad+\lambda_{\mathrm{CL}}\sum_{G\in\mathcal C}\phi_{\mathrm{CL}}(G;\mathbf z).
\end{aligned}
\label{tr:eq:objective}
\end{equation}
All runs use $\lambda_{\mathrm{ML}}=0.02$ and $\lambda_{\mathrm{CL}}=0.01$.
The optimization starts from the \gpk $k$-means++ initialization~\cite{arthur2007kmeanspp}, without an additional preliminary run to convergence.
Following incremental coordinate optimization~\cite{Nie22Coordinate}, the solver evaluates exact move costs using cluster statistics and constraint counts, accepts decreasing moves and preserves nonempty clusters.
Samples are visited in the original group-size-based, constrained-first order.

At every acquisition step, the current assignment is retained and the caches are rebuilt for the updated groups.
The solver then executes exactly $\tau$ complete sweeps, where a sweep visits all samples once.
Thus $\tau=2$ means two complete sweeps between acquisitions, rather than two outer acquisition rounds.
Neither labels nor constraints are reset between intervals.
After the last interval, the algorithm returns the current partition directly, without a convergence tail or best-round selection.
The descent guarantee applies within an interval with fixed groups; adding relations and rebuilding groups can change the objective between intervals.

\begin{algorithm}[t]
\caption{Adaptive group-constraint acquisition}
\label{tr:alg:adaptive}
\KwIn{Texts and embeddings, cluster count $k$, initial assignment $\mathbf z$, topology, source, interval $\tau$, and budgets $A$, $S=80$, $B=2560$.}
\KwOut{Final assignment $\mathbf z$.}
$\mathcal A,Q,E_{\mathrm{ML}},E_{\mathrm{CL}}\leftarrow\emptyset$; $c_i\leftarrow0$ for all $i$\;
\For{$s=0,\tau,\ldots,S-\tau$}{
Compute scores from current assignments and movement flags\;
Grow $\mathcal A$ to $\lceil A(s+\tau)/S\rceil$ anchors\;
Select $B\tau/S$ fresh topology-eligible pairs $Q_s$\;
Obtain Gold or LLM judgments; add all $Q_s$ to $Q$\;
Accumulate explicit ML/CL answers; unknown answers add no edge\;
Rebuild ML components and the CL clique partition\;
$\mathbf z_{\mathrm{old}}\leftarrow\mathbf z$\;
Continue normalized \gpk for exactly $\tau$ sweeps\;
$c_i\leftarrow\mathbb I[z_i\ne z_{\mathrm{old},i}]$ for all $i$\;
}
\KwRet{$\mathbf z$}\;
\end{algorithm}

\subsection{Anchor selection from the current partition}
For each point, the selector combines three signals:
\begin{equation}
s_i=r(u_i)+r(-m_i)+0.5c_i.
\label{tr:eq:selection}
\end{equation}
Here $r$ is the empirical rank normalized to $[0,1]$, with average ranks for ties.
The disagreement $u_i$ is the fraction of the point's ten nearest embedding neighbors assigned to other clusters.
Let $a_i$ be its squared distance to its assigned centroid and $b_i$ its squared distance to the nearest competing centroid.
The relative margin is $m_i=(b_i-a_i)/\max\{b_i+a_i,10^{-12}\}$.
The flag $c_i$ indicates a net label change between the start and end of the previous interval and is zero initially.
The score favors neighborhood disagreement, small or negative margins, and recent reassignment.
It is recomputed from the actual current \gpk state before each query batch.

The total anchor budget $A$ is specified as a proportion of the dataset size.
Before an interval beginning after $s$ completed sweeps, the anchor target is $\lceil A(s+\tau)/80\rceil$.
Previously selected anchors are retained.
New anchors are taken from clusters with the fewest existing anchors, prioritizing high-scoring points and using deterministic tie breaking.
The current cluster assignments determine these counts; an anchor can therefore change cluster without leaving the anchor set.
No ground-truth labels are read by the selector.

\subsection{Closure and fanout query topologies}
For Group--Closure, both endpoints of a query must be in the cumulative anchor set.
For Group--Fanout, at least one endpoint must be an anchor, and the other can be any sample.
Both variants visit the ranked anchors in round-robin order and alternate two candidate routes.
The cross-cluster route follows embedding-distance order while requiring a different current cluster.
The within-cluster route prioritizes candidates with a different nearest competing centroid, then higher point scores, shorter distances and sample IDs.
The target split is one half of the batch per route; an exhausted route can be replaced by the other route.
The ten-neighbor set used to compute $u_i$ is a scoring neighborhood, not a hard restriction on the query candidate pool.

Every previously queried unordered pair is excluded, including pairs with uncertain or rejected answers.
The implementation does not assign semantic labels to the entire corpus or derive a dense relation graph from such labels.
Only explicitly queried pairs can create raw ML/CL edges.
Because the source changes the evolving clustering, Gold and LLM runs can select different later pairs and anchors even when they start from the same initial partition.

\subsection{Semantic judgments and cumulative groups}
The Gold oracle returns ML exactly when the selected endpoints have the same benchmark label, and CL otherwise.
The LLM setting makes fresh calls to \texttt{deepseek-v4-flash-0731} at temperature zero, with 32 pairs per main request.
Each pair directly contains both full texts and a local pair ID.
Prompts describe the task's granularity but omit benchmark labels, the official class list and current predicted cluster IDs.
Answers explicitly distinguish ML, CL and uncertainty.
Valid answers are kept; only missing or conflicting format entries receive one repair request, and explicit semantic uncertainty is not repeatedly challenged.
Provider rejections are treated as unknown, and transport failures permit at most three attempts per request.
Raw responses are saved before parsing.
Only identical request payloads within this experiment share cached responses.

All valid returned relations accumulate, including those already satisfied by the current partition.
Unknown answers consume the pair budget but add no constraint.
ML connected components of size at least two become groups under Eq.~\eqref{tr:eq:ml}.
For CL, repeatedly take the lexicographically first residual edge, append the smallest-ID common neighbor until no extension is possible or size $k$ is reached, and remove the resulting clique's edges.
This partitions the observed CL edges into edge-disjoint cliques, including two-point groups.
Cliques can share vertices but never edges, and no additional CL edges are inferred by transitivity.
Each clique uses Eq.~\eqref{tr:eq:cl}.
Rebuilding groups after acquisition can change the normalization of old relations, even though no raw accepted edge is removed.

\section{Extension Experiments}
\label{tr:sec:evaluation}

\subsection{Experimental Setup}
The evaluation includes Banking77~\cite{casanueva2020banking77}, CLINC~\cite{larson2019clinc}, Massive~\cite{fitzgerald2023massive}, MTOP~\cite{li2021mtop}, GoEmotions~\cite{demszky2020goemotions}, and Reddit clustering data from the MTEB benchmark~\cite{muennighoff2023mteb}.
Table~\ref{tr:tab:datasets} summarizes the nine datasets; I/D denote intent/domain clustering.
The main paper uses INSTRUCTOR-large~\cite{su2023instructor} for text representations.
Here, the embeddings are fixed throughout constraint acquisition and optimization.
Matched variants share the same initial partition and use float64 computation.

\begin{table}[!t]
\centering\small
\setlength{\tabcolsep}{3pt}
\caption{Dataset statistics: number of texts $n$, number of classes $k$, and embedding dimension $d$.}
\label{tr:tab:datasets}
\begin{tabular*}{\columnwidth}{@{\extracolsep{\fill}}lrrr@{}}
\toprule
Dataset & $n$ & $k$ & $d$ \\
\midrule
Banking77~\cite{casanueva2020banking77} & 3,080 & 77 & 768 \\
CLINC(I)~\cite{larson2019clinc} & 4,500 & 150 & 768 \\
Massive(I)~\cite{fitzgerald2023massive} & 2,974 & 59 & 768 \\
MTOP(I)~\cite{li2021mtop} & 4,386 & 102 & 768 \\
CLINC(D)~\cite{larson2019clinc} & 4,500 & 10 & 768 \\
Massive(D)~\cite{fitzgerald2023massive} & 2,974 & 18 & 768 \\
MTOP(D)~\cite{li2021mtop} & 4,386 & 11 & 768 \\
GoEmo~\cite{demszky2020goemotions} & 5,940 & 27 & 768 \\
Reddit~\cite{muennighoff2023mteb} & 3,217 & 50 & 768 \\
\bottomrule
\end{tabular*}
\end{table}

The experiment fixes the total query budget at 2,560 distinct pairs and the optimization budget at 80 complete sweeps.
The final anchor budget is approximately 24.35\% of the samples in each dataset.
For $\tau=2$, acquisition occurs before sweeps $0,2,\ldots,78$: 40 batches of 64 pairs.
For $\tau=5$, it occurs before sweeps $0,5,\ldots,75$: 16 batches of 160 pairs.
Sweeps with zero moves still count toward the fixed budget.
Each LLM configuration therefore has 80 nominal main requests before exact-request cache sharing, retries and format repairs.
The evaluation crosses two relation sources (Gold and LLM), two topologies (Closure and Fanout), and two update intervals ($\tau=2$ and $\tau=5$) on each dataset.

The schedules share the same total pair-query and sweep budgets while varying batch size and the timing of anchor release.
We report accepted relations, group counts and model usage alongside clustering quality to characterize the resulting supervision and acquisition cost.
Each configuration is evaluated using the partition saved after its final sweep.
ACC, NMI and ARI are percentages.
Dataset tables report the mean and standard deviation across runs.
Macro results are the unweighted mean of the dataset-level means.

\subsection{Clustering Quality}
We compare the overall clustering quality of the two acquisition topologies and examine how it evolves during optimization.
\begin{table}[!t]
\centering\small
\caption{Macro clustering scores (\%). Closure/Fanout denote the group variants.}
\label{tr:tab:macro}
\begin{tabular*}{\columnwidth}{@{\extracolsep{\fill}}llcrrr@{}}
\toprule
Source & Method & $\tau$ & ACC & NMI & ARI \\
\midrule
Gold & \gpk & -- & 58.54 & 65.92 & 46.18 \\
 & Closure & 5 & 59.38 & 67.06 & 47.69 \\
 & Closure & 2 & 59.58 & 67.26 & 48.07 \\
 & Fanout & 5 & 59.57 & 67.31 & 48.20 \\
 & Fanout & 2 & 59.81 & 67.50 & 48.28 \\
\midrule
LLM & \gpk & -- & 58.75 & 65.96 & 46.39 \\
 & Closure & 5 & 58.54 & 66.11 & 46.38 \\
 & Closure & 2 & 58.86 & 66.30 & 46.80 \\
 & Fanout & 5 & 58.39 & 66.08 & 46.33 \\
 & Fanout & 2 & 58.79 & 66.30 & 46.74 \\
\bottomrule
\end{tabular*}
\end{table}

Table~\ref{tr:tab:macro} shows modest gains from the shorter interval for both sources and both topologies.
Gold ACC increases by 0.203 percentage points for closure and 0.239 for fanout.
With real LLM judgments, the gains are 0.327 and 0.395 points; the corresponding NMI gains are 0.187 and 0.224, and ARI gains are 0.416 and 0.403.
Each ACC comparison improves the dataset mean on six datasets and decreases it on three.
Differences are computed before rounding the displayed scores.

The two LLM variants at $\tau=2$ reach 58.86 and 58.79 ACC, close to the \gpk ACC of 58.75.
The \gpk LLM row uses generated constraints; the Gold row uses oracle answers within the \gpk query scopes and source-specific rules.

\begin{figure*}[!t]
\centering
\includegraphics[width=\textwidth]{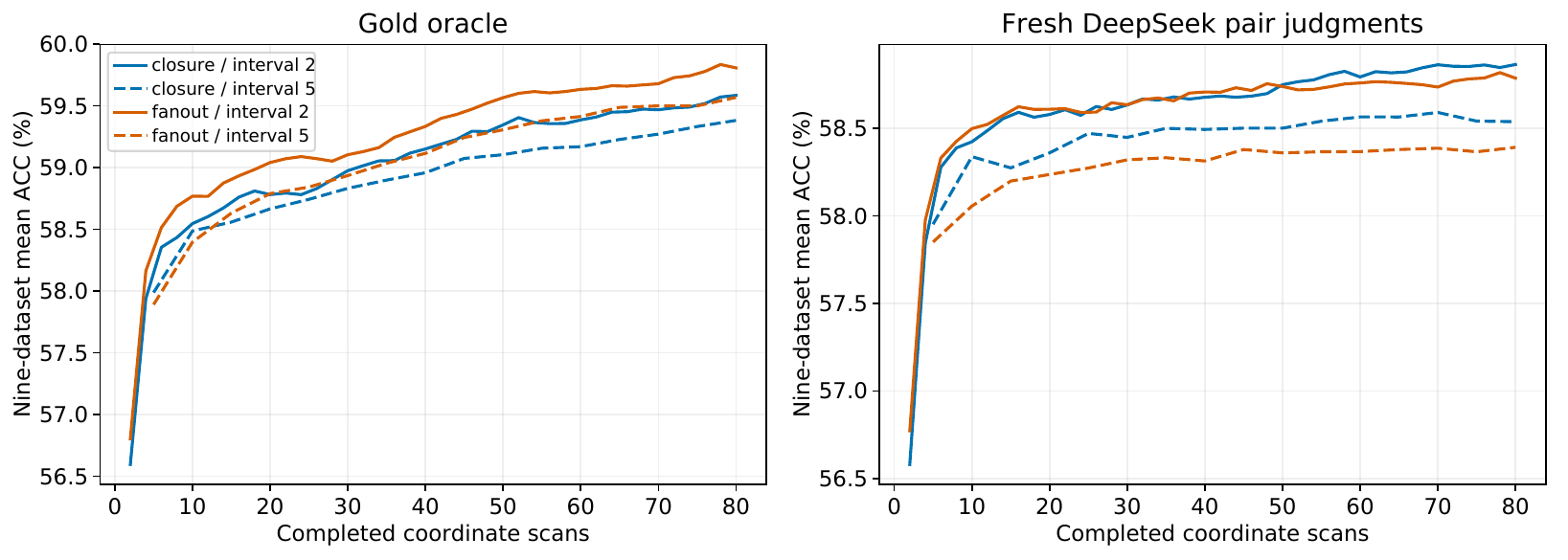}
\caption{Macro ACC over completed sweeps, using saved stage predictions. Lines join saved interval endpoints; all runs end at 80 sweeps and 2,560 queried pairs. Gold and LLM follow their own adaptive trajectories.}
\Description{Two panels plot macro accuracy against completed sweeps for Gold and real LLM judgments. Each panel compares closure and fanout with intervals two and five.}
\label{tr:fig:progress}
\end{figure*}

Figure~\ref{tr:fig:progress} shows rapid early improvement followed by smaller changes as more relations arrive.
Shorter intervals yield more active sweeps, reflecting continued adjustments to newly acquired constraints.
The plotted ACC values are computed from saved predictions for evaluation; query selection uses the current partition and embedding geometry.

\subsection{Effect of Acquisition Frequency}
We examine how shortening the acquisition interval affects clustering quality across datasets.
Tables~\ref{tr:tab:gold-detail} and~\ref{tr:tab:llm-detail} provide all ACC, NMI and ARI means and standard deviations for the adaptive configurations.
In these and the following tables, C/F denote Group--Closure/Group--Fanout and the suffix gives the sweep interval.
The dataset-level results reveal variation in the response to acquisition frequency.

\begin{table}[!t]
\centering\small
\setlength{\tabcolsep}{2.5pt}
\caption{Gold oracle results (\%): mean $\pm$ standard deviation. C/F denote Closure/Fanout; suffixes indicate the sweep interval.}
\label{tr:tab:gold-detail}
\begin{tabular}{@{}llrrrr@{}}
\toprule
Dataset & Metric & C5 & C2 & F5 & F2 \\
\midrule
Banking77 & ACC & $71.13\pm0.35$ & $70.61\pm1.12$ & $71.53\pm0.12$ & $70.88\pm0.31$ \\
 & NMI & $84.91\pm0.33$ & $84.81\pm0.51$ & $85.17\pm0.47$ & $84.96\pm0.75$ \\
 & ARI & $61.24\pm0.24$ & $61.20\pm0.58$ & $62.31\pm0.81$ & $61.56\pm1.13$ \\
\addlinespace[2pt]
CLINC(I) & ACC & $85.30\pm1.54$ & $84.94\pm0.52$ & $85.74\pm0.48$ & $85.77\pm0.64$ \\
 & NMI & $94.96\pm0.32$ & $94.95\pm0.11$ & $95.22\pm0.12$ & $94.87\pm0.08$ \\
 & ARI & $81.50\pm1.77$ & $81.29\pm0.92$ & $82.36\pm0.68$ & $81.82\pm0.53$ \\
\addlinespace[2pt]
Massive(I) & ACC & $60.96\pm1.18$ & $61.37\pm1.46$ & $61.33\pm0.85$ & $61.50\pm2.13$ \\
 & NMI & $77.28\pm0.50$ & $77.32\pm0.86$ & $77.85\pm0.45$ & $77.89\pm0.83$ \\
 & ARI & $47.97\pm1.07$ & $48.44\pm2.11$ & $48.40\pm0.26$ & $48.74\pm1.85$ \\
\addlinespace[2pt]
MTOP(I) & ACC & $34.75\pm0.73$ & $34.97\pm1.05$ & $34.88\pm0.47$ & $34.25\pm0.59$ \\
 & NMI & $68.65\pm0.62$ & $69.04\pm0.44$ & $68.87\pm0.52$ & $68.98\pm0.54$ \\
 & ARI & $26.51\pm1.52$ & $27.12\pm1.43$ & $26.82\pm1.17$ & $26.39\pm1.00$ \\
\addlinespace[2pt]
CLINC(D) & ACC & $59.16\pm2.71$ & $60.33\pm1.49$ & $58.61\pm3.68$ & $60.24\pm1.64$ \\
 & NMI & $54.47\pm1.41$ & $54.95\pm1.35$ & $54.08\pm1.74$ & $55.09\pm1.66$ \\
 & ARI & $39.98\pm2.12$ & $40.42\pm2.05$ & $39.59\pm2.34$ & $40.47\pm2.30$ \\
\addlinespace[2pt]
Massive(D) & ACC & $65.87\pm0.68$ & $65.88\pm0.68$ & $66.11\pm1.10$ & $66.39\pm0.44$ \\
 & NMI & $67.44\pm0.31$ & $67.31\pm0.31$ & $67.87\pm0.50$ & $68.05\pm0.27$ \\
 & ARI & $50.25\pm2.59$ & $50.47\pm1.46$ & $51.52\pm1.78$ & $51.32\pm2.17$ \\
\addlinespace[2pt]
MTOP(D) & ACC & $94.49\pm0.21$ & $95.46\pm0.18$ & $95.01\pm0.10$ & $95.59\pm0.09$ \\
 & NMI & $91.05\pm0.26$ & $92.23\pm0.25$ & $91.82\pm0.15$ & $92.47\pm0.20$ \\
 & ARI & $88.69\pm0.45$ & $90.54\pm0.33$ & $89.64\pm0.22$ & $90.80\pm0.21$ \\
\addlinespace[2pt]
GoEmo & ACC & $14.86\pm0.41$ & $15.07\pm1.04$ & $15.04\pm0.78$ & $15.03\pm0.63$ \\
 & NMI & $11.53\pm0.40$ & $11.64\pm0.24$ & $11.75\pm0.11$ & $11.73\pm0.06$ \\
 & ARI & $4.25\pm0.41$ & $4.22\pm0.55$ & $4.30\pm0.25$ & $4.24\pm0.31$ \\
\addlinespace[2pt]
Reddit & ACC & $47.91\pm0.52$ & $47.63\pm0.36$ & $47.87\pm0.46$ & $48.61\pm0.46$ \\
 & NMI & $53.24\pm0.30$ & $53.11\pm0.19$ & $53.16\pm0.74$ & $53.51\pm0.13$ \\
 & ARI & $28.83\pm0.72$ & $28.97\pm0.72$ & $28.82\pm0.78$ & $29.17\pm0.50$ \\
\bottomrule
\end{tabular}
\end{table}

Gold benefits on CLINC(D), while Banking77 decreases under both topologies.
The variation indicates that acquisition frequency interacts with the embedding geometry, queried relations and task granularity.

\begin{table}[!t]
\centering\small
\setlength{\tabcolsep}{2.5pt}
\caption{Fresh DeepSeek results (\%): mean $\pm$ standard deviation. C/F denote Closure/Fanout; suffixes indicate the sweep interval.}
\label{tr:tab:llm-detail}
\begin{tabular}{@{}llrrrr@{}}
\toprule
Dataset & Metric & C5 & C2 & F5 & F2 \\
\midrule
Banking77 & ACC & $69.71\pm0.56$ & $69.59\pm0.65$ & $70.24\pm0.43$ & $69.70\pm0.33$ \\
 & NMI & $83.76\pm0.40$ & $83.88\pm0.60$ & $84.14\pm0.77$ & $83.73\pm0.76$ \\
 & ARI & $59.37\pm0.61$ & $59.43\pm0.65$ & $60.21\pm0.55$ & $59.43\pm0.90$ \\
\addlinespace[2pt]
CLINC(I) & ACC & $83.84\pm1.19$ & $83.87\pm0.94$ & $84.37\pm0.78$ & $84.55\pm0.71$ \\
 & NMI & $93.96\pm0.22$ & $94.15\pm0.27$ & $94.25\pm0.27$ & $94.25\pm0.19$ \\
 & ARI & $79.16\pm1.45$ & $79.41\pm1.12$ & $79.87\pm1.32$ & $80.15\pm0.85$ \\
\addlinespace[2pt]
Massive(I) & ACC & $60.00\pm0.95$ & $60.32\pm1.44$ & $59.67\pm1.82$ & $60.03\pm1.45$ \\
 & NMI & $76.15\pm0.56$ & $76.38\pm0.48$ & $75.75\pm0.34$ & $76.10\pm0.69$ \\
 & ARI & $46.34\pm1.29$ & $47.06\pm1.66$ & $45.39\pm1.76$ & $46.50\pm1.42$ \\
\addlinespace[2pt]
MTOP(I) & ACC & $34.26\pm0.31$ & $34.69\pm0.71$ & $34.37\pm0.91$ & $33.92\pm0.50$ \\
 & NMI & $68.46\pm0.26$ & $68.45\pm0.30$ & $68.39\pm0.20$ & $68.49\pm0.33$ \\
 & ARI & $26.18\pm1.16$ & $26.67\pm1.24$ & $26.25\pm1.37$ & $26.19\pm0.87$ \\
\addlinespace[2pt]
CLINC(D) & ACC & $58.76\pm2.61$ & $60.71\pm1.75$ & $58.10\pm4.23$ & $59.96\pm1.55$ \\
 & NMI & $53.78\pm1.29$ & $54.38\pm1.38$ & $53.35\pm2.46$ & $54.45\pm1.19$ \\
 & ARI & $39.35\pm2.00$ & $40.09\pm2.31$ & $38.87\pm2.86$ & $40.00\pm1.97$ \\
\addlinespace[2pt]
Massive(D) & ACC & $65.47\pm0.24$ & $65.49\pm0.51$ & $63.99\pm2.71$ & $65.43\pm0.95$ \\
 & NMI & $66.78\pm0.32$ & $66.84\pm0.14$ & $66.41\pm0.92$ & $66.79\pm0.47$ \\
 & ARI & $49.46\pm1.68$ & $49.73\pm1.46$ & $48.46\pm3.50$ & $49.87\pm1.50$ \\
\addlinespace[2pt]
MTOP(D) & ACC & $92.78\pm0.21$ & $93.40\pm0.18$ & $93.14\pm0.46$ & $93.15\pm0.26$ \\
 & NMI & $88.16\pm0.24$ & $88.77\pm0.29$ & $88.65\pm0.78$ & $88.69\pm0.15$ \\
 & ARI & $85.37\pm0.42$ & $86.59\pm0.39$ & $86.00\pm0.87$ & $86.05\pm0.60$ \\
\addlinespace[2pt]
GoEmo & ACC & $14.95\pm0.94$ & $14.81\pm0.43$ & $14.90\pm0.60$ & $14.73\pm0.65$ \\
 & NMI & $11.57\pm0.35$ & $11.44\pm0.29$ & $11.56\pm0.06$ & $11.37\pm0.10$ \\
 & ARI & $4.17\pm0.55$ & $4.07\pm0.22$ & $4.13\pm0.27$ & $4.02\pm0.27$ \\
\addlinespace[2pt]
Reddit & ACC & $47.07\pm0.95$ & $46.91\pm0.05$ & $46.73\pm0.26$ & $47.61\pm0.76$ \\
 & NMI & $52.36\pm0.75$ & $52.38\pm0.20$ & $52.20\pm0.38$ & $52.87\pm0.20$ \\
 & ARI & $28.03\pm0.87$ & $28.13\pm0.60$ & $27.84\pm0.57$ & $28.42\pm0.68$ \\
\bottomrule
\end{tabular}
\end{table}

With LLM judgments, CLINC(D) gains 1.948 ACC points for closure and 1.859 for fanout when moving from five sweeps to two.
Fanout also gains 1.446 points on Massive(D) and 0.881 on Reddit, but loses 0.541 on Banking77 and 0.456 on MTOP(I).
No dataset is excluded from the macro averages.

\subsection{Constraint Reliability and Utility}
We examine relation correctness, graph redundancy and optimization activity to characterize the supervision provided by the acquired constraints.
Table~\ref{tr:tab:counts} distinguishes queried pairs, accepted edges and penalty groups.
Every configuration queries 2,560 pairs, but LLM uncertainty or service rejection can leave some pairs without an edge.
Connected-component ML groups and clique-based CL groups are different objects from the raw queried pairs.
Closure and fanout produce different group counts, which determine the total constraint contribution together with the per-group penalties.

\begin{table}[!t]
\centering\small
\setlength{\tabcolsep}{3pt}
\caption{Final mean counts. Run identifies source, topology and interval. Unknown pairs add no edge; $|\mathcal M|$ and $|\mathcal C|$ count ML/CL groups.}
\label{tr:tab:counts}
\begin{tabular*}{\columnwidth}{@{\extracolsep{\fill}}lrrrrr@{}}
\toprule
Run & \shortstack{ML\\edges} & \shortstack{CL\\edges} & Unknown & $|\mathcal M|$ & $|\mathcal C|$ \\
\midrule
Gold C5 & 695.7 & 1864.3 & 0.0 & 57.0 & 1346.0 \\
Gold C2 & 648.8 & 1911.2 & 0.0 & 50.6 & 1326.4 \\
Gold F5 & 854.6 & 1705.4 & 0.0 & 99.4 & 1414.9 \\
Gold F2 & 779.3 & 1780.7 & 0.0 & 81.7 & 1445.0 \\
LLM C5 & 946.4 & 1601.4 & 12.2 & 9.8 & 1178.2 \\
LLM C2 & 926.9 & 1631.4 & 1.7 & 8.2 & 1141.8 \\
LLM F5 & 973.4 & 1582.2 & 4.4 & 68.1 & 1281.4 \\
LLM F2 & 910.3 & 1643.1 & 6.6 & 54.6 & 1282.0 \\
\bottomrule
\end{tabular*}
\end{table}

Table~\ref{tr:tab:diagnostics} summarizes three diagnostics in addition to relation correctness.
An active sweep makes at least one reassignment.
The violated fraction measures whether each newly acquired valid relation disagrees with the partition \emph{at acquisition time}.
The ML-cycle fraction counts new ML edges that do not reduce the cumulative number of connected components, including cycles formed within a batch.
Since the ML objective depends on components, adding an edge inside an unchanged component alone leaves its group penalty unchanged.
These post-hoc diagnostics characterize the relations selected during acquisition.

\begin{table}[!t]
\centering\small
\setlength{\tabcolsep}{3pt}
\caption{Mean activity and relation quality. Active counts sweeps with moves out of 80; Final counts moves in the last sweep. Remaining columns are percentages, as defined in the text.}
\label{tr:tab:diagnostics}
\begin{tabular*}{\columnwidth}{@{\extracolsep{\fill}}lrrrrrr@{}}
\toprule
Run & Active & Final & Violated & \shortstack{ML\\cycles} & \shortstack{ML\\correct} & \shortstack{CL\\correct} \\
\midrule
Gold C5 & 58.48 & 1.89 & 50.46 & 38.76 & 100.00 & 100.00 \\
Gold C2 & 78.48 & 8.41 & 49.21 & 38.79 & 100.00 & 100.00 \\
Gold F5 & 58.19 & 0.96 & 56.66 & 22.83 & 100.00 & 100.00 \\
Gold F2 & 78.67 & 6.04 & 55.08 & 22.86 & 100.00 & 100.00 \\
LLM C5 & 59.74 & 0.41 & 58.05 & 44.77 & 46.92 & 85.33 \\
LLM C2 & 78.30 & 5.93 & 56.03 & 46.90 & 42.20 & 86.64 \\
LLM F5 & 59.15 & 0.48 & 66.52 & 19.81 & 57.98 & 82.63 \\
LLM F2 & 79.07 & 6.19 & 64.92 & 20.51 & 53.66 & 84.03 \\
\bottomrule
\end{tabular*}
\end{table}

At $\tau=2$, LLM closure and fanout make changes in 78.30 and 79.07 of the 80 sweeps on average, compared with 59.74 and 59.15 at $\tau=5$.
The last sweep moves 5.93 and 6.19 samples on average, indicating continued optimization activity at the budget endpoint.
For the two-sweep LLM runs, ML correctness is 42.20\% for closure and 53.66\% for fanout; CL correctness is 86.64\% and 84.03\%.
Relation correctness is measured against benchmark labels on the queried pairs.
The corresponding ML-cycle fractions are 46.90\% and 20.51\%.
Fanout adds connectivity more often, but its LLM ACC remains slightly below closure's.
New connections can propagate erroneous ML relations, and component merges change group normalization, highlighting the importance of relation reliability alongside connectivity.

\subsection{Constraint Acquisition Cost}
The experiment recorded 13,768,434 tokens, comprising 12,902,134 prompt tokens and 866,300 output tokens.
Token totals are computed from the usage reported by the provider.
The ledger includes format repair and transport retries, with exact-request cache sharing applied across configurations.
Gold consumes no model calls.
Table~\ref{tr:tab:cost} reports the disjoint dataset totals for the valid experiment.
Interval-specific request sets can overlap through cache sharing, so summing their attributed costs would double-count some requests.

\begin{table}[!t]
\centering\small
\setlength{\tabcolsep}{3pt}
\caption{Recorded token consumption by dataset. Prompt and Output count input and generated tokens, respectively; Total is their sum.}
\label{tr:tab:cost}
\begin{tabular*}{\columnwidth}{@{\extracolsep{\fill}}lrrr@{}}
\toprule
Dataset & Prompt & Output & Total \\
\midrule
Banking77 & 1,690,453 & 97,402 & 1,787,855 \\
CLINC(I) & 1,284,448 & 90,213 & 1,374,661 \\
Massive(I) & 1,276,520 & 101,114 & 1,377,634 \\
MTOP(I) & 1,327,318 & 89,108 & 1,416,426 \\
CLINC(D) & 1,300,012 & 89,657 & 1,389,669 \\
Massive(D) & 1,266,427 & 101,082 & 1,367,509 \\
MTOP(D) & 1,312,098 & 96,024 & 1,408,122 \\
GoEmo & 1,758,624 & 100,700 & 1,859,324 \\
Reddit & 1,686,234 & 101,000 & 1,787,234 \\
\midrule
Total & 12,902,134 & 866,300 & 13,768,434 \\
\bottomrule
\end{tabular*}
\end{table}

The final request format was validated with synthetic pairs before benchmark-score analysis.
An earlier format that referenced texts through a separate indexed table answered only 8 of 32 synthetic relations correctly and was stopped.
Providing both texts directly inside each pair answered 32 of 32 correctly, and the exact final request schema passed a separate 32-pair check.
The reported LLM results use only the successful direct-text format; old LLM answers and clustering states were not reused.
Valid Gold checkpoints were reusable because their selection and optimization did not depend on the model's request format.

For complete cost accounting, the indexed-format diagnostic used 10,429,175 recorded tokens, and the synthetic checks used 5,641 tokens.
Including these diagnostic stages, the recorded consumption was 24,203,250 tokens.

\FloatBarrier

\subsection{Discussion}
The experiments support incorporating intermediate clustering feedback into constraint acquisition.
Under matched pair-query and sweep budgets, the two-sweep schedule improves macro-average ACC, NMI and ARI for both topologies and both relation sources.

Topology affects the structure and quality of the resulting supervision.
Fanout reduces redundant ML connections and achieves the highest Gold scores, while closure yields slightly higher LLM ACC at $\tau=2$.
These observations highlight the value of considering relation correctness together with connectivity and group formation.

Relation noise and late constraint updates remain practical challenges: erroneous ML links can propagate through component merges, and the final updates still trigger reassignments.
Future work will investigate reliability-aware selection and adaptive update intervals to obtain informative relations while allowing sufficient optimization after acquisition.

\FloatBarrier
\clearpage
\endgroup

\end{document}